\documentclass[sigconf, nonacm]{acmart}
\usepackage{pvldb}

\renewcommand\vldbdoi{XX.XX/XXX.XX}
\renewcommand\vldbpages{XXX-XXX}
\renewcommand\vldbavailabilityurl{https://github.com/akhileshvb/accelerator-codegen/tree/artifact} 

\usepackage[inline]{enumitem}
\usepackage[dvipsnames]{xcolor}
\usepackage{todonotes}
\usepackage{lipsum}
\usepackage{subcaption}
\usepackage{cleveref}
\usepackage{makecell}
\usepackage{graphicx}
\usepackage{tabularx}
\usepackage{bm}
\usepackage{pifont}

\newcommand{\sparagraph}[1]{\vspace{1mm}\noindent {\bf #1}}
\newcommand{\thesystem}{JetStream}
\newcommand{\qsem}{QSem}

\usepackage{adjustbox}
\usepackage{array}
\usepackage{subcaption}
\newcolumntype{R}[2]{%
    >{\adjustbox{angle=#1,lap=\width-(#2)}\bgroup}%
    l%
    <{\egroup}%
}

\newcommand{\Sup}{\textcolor{ForestGreen}{\textbf{\ding{51}}}}
\newcommand{\NSup}{\textcolor{red}{\textbf{\ding{55}}}}
\newcolumntype{Y}{>{\centering\arraybackslash}X}

\begin{document}

\title{JetStream: Generating Verified Query Accelerators for Existing Database Systems}

\author{Akhilesh Balasingam}
\orcid{0009-0003-8637-896X}
\affiliation{%
  \institution{MIT CSAIL}
  \city{Cambridge}
  \state{MA}
  \country{USA}
}
\email{akhilvb@mit.edu}

\author{Amadou Ngom}
\orcid{0009-0000-8422-5212}
\affiliation{%
  \institution{MIT CSAIL}
  \city{Cambridge}
  \state{MA}
  \country{USA}
}
\email{ngom@mit.edu}

\author{Geoffrey X. Yu}
\orcid{0009-0005-3186-1465}
\affiliation{%
  \institution{MIT CSAIL}
  \city{Cambridge}
  \state{MA}
  \country{USA}
}
\email{geoffxy@mit.edu}

\author{Tim Kraska}
\orcid{0009-0003-2414-2759}
\affiliation{%
  \institution{MIT CSAIL}
  \city{Cambridge}
  \state{MA}
  \country{USA}
}
\email{kraska@mit.edu}

\begin{abstract}
% TODO: clunky
Recent work has shown that LLMs can synthesize highly instance-optimized database systems for fixed workloads, but to date these systems only replace the DBMS's native storage and assume static data: users must migrate their data, often give up ad-hoc queries, lose acceleration once the data changes, and lack formal correctness guarantees. We present JetStream, which instead generates query-specific accelerators that extend an existing DBMS. Given a set of parameterized query templates, a staged agentic workflow synthesizes accelerators that run directly over the DBMS's native tables and may maintain custom auxiliary state, while the host DBMS remains responsible for storage, transactions, recovery, and all the other queries. A fixed, engine-neutral substrate keeps auxiliary state transactionally consistent under writes, while generated backend adapters handle engine-specific hooks. JetStream generates two kinds of accelerators: tested accelerators, which may use any optimization and are validated by differential testing, and verified accelerators, which trade some performance for a machine-checked proof that every answer satisfies a trusted specification derived from the SQL. On TPC-H at SF=20, JetStream achieves a $584\times$ geomean read speedup over DuckDB ($89\times$ verified) and retains $139\times$ ($46\times$ verified) under a continuous update stream of 10,000 modified rows per second.

\end{abstract}

\maketitle

%%% VLDB block start %%%
\vldbtopmatter
%%% VLDB block end %%%

\section{Introduction}\label{sec:intro}
% \singlefigplaceholder{2.5in}{Motivating result: Workload speedup on our system with
% comparison to GenDB, BespokeOLAP, Base database system (probably DuckDB). Construct
% a workload with sufficient writes. Plot speedup, auxiliary
% space usage. Result should show Ours > \{GenDB, BespokeOLAP\} > Base system. It
% might also be that Ours > Base > \{GenDB, BespokeOLAP\} with sufficient writes.
% We should include both this realistic setting (i.e., with writes) and a
% read-only setting (the scenario GenDB/BespokeOLAP were designed for).}

% TODO:
% move to bar chart, move mem footprint to later section
% add BRAD in 1 sentence

\begin{figure}[t]
    \centering
    \includegraphics[width=\columnwidth]{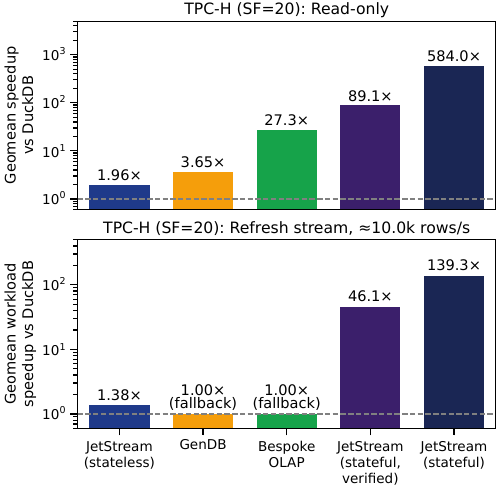}
    \caption[Geomean speedup over DuckDB on TPC-H SF=20]{%
      Geomean speedup over DuckDB on TPC-H SF=20, read-only (top) and under the standard TPC-H refresh stream ($\approx$10K modified rows/s; bottom). All systems, including DuckDB, use 40 threads, except verified accelerators.
      Stateful \thesystem{} reaches $584\times$ read-only ($89.1\times$ verified) and keeps $139.3\times$ under writes ($46.1\times$ verified). GenDB and Bespoke OLAP do not maintain state under writes and fall back to DuckDB
      ($1\times$).\protect\footnotemark}
    \vspace{-1em}
    \label{fig:headline}
\end{figure}
\footnotetext{Numbers differ from the arXiv version, which used single-threaded implementation and a write rate of 2.5k rows/sec.}

\newcommand{\Yes}[1]{%
  \makecell[l]{\makebox[\linewidth][c]{\large\Sup{}}\\[0.1em] #1}}
\newcommand{\No}{\makecell[c]{\large\NSup{}}}

\newcommand{\Partial}{\textcolor{orange!85!black}{\large$\boldsymbol{\circ}$}}

\begin{table}[t]
\caption{No existing query accelerator synthesis system supports all desired practical capabilities.}
\label{tbl:qual-comparison}
\vspace{-0.5em}
\small
\setlength{\tabcolsep}{3pt}
\begin{tabularx}{\columnwidth}{@{}l>{\raggedright\arraybackslash}X cccc@{}}
  \toprule
  \textbf{System} & \textbf{Approach} &
    \makecell{\textbf{Ad-hoc}\\\textbf{queries}} &
    \makecell{\textbf{Accel.}\\\textbf{under}\\\textbf{writes}} &
    \makecell{\textbf{Native}\\\textbf{DBMS}\\\textbf{storage}} &
    \makecell{\textbf{Proven}\\\textbf{correct}} \\
  \midrule
  Castor~\cite{castor-feser20} & Deductive synthesis & \NSup{} & \NSup{} & \NSup{} & \Partial{}$^{a}$ \\
  GenDB~\cite{gendb} & Agentic synthesis & \NSup{} & \NSup{} & \NSup{} & \NSup{} \\
  Bespoke OLAP~\cite{bespoke-olap} & Agentic synthesis & \Sup{}$^{b}$ & \NSup{} & \NSup{} & \NSup{} \\
  \midrule
  \textbf{\thesystem{}} & Agentic synthesis w/ gen. maint. & \Sup{} & \Sup{} & \Sup{} & \Sup{}$^{c}$ \\
  \bottomrule
  \multicolumn{6}{@{}p{\columnwidth}@{}}{\footnotesize\rule{0pt}{2.2ex}$^{a}$Correct by construction through semantics-preserving rewrites, without a machine-checked proof for each generated program. $^{b}$Through a separate DBMS that holds a second copy of the data. $^{c}$Optional: verified accelerators carry a machine-checked proof against a specification derived from the SQL; tested accelerators use differential testing.}
\end{tabularx}
\vspace{-1em}
\end{table}

% Story:
% - Relational DBMSes designed to be general
% - If you know which queries you expect to run over and over, you can clearly
%   run it faster with a specialized system
% - Up until now, you would only build such a system if the workload is
%   particularly important (e.g., revenue generating)
% - Now with LLMs, this calculation has changed
%
% - Recent work has explored using LLMs to synthesize query engines for specific
%   query templates
% - Describe GenDB, BespokeOLAP
% - Both systems good for static databases
% - However, we argue that both systems lack capabilities needed for practicality
% - Supporting unseen queries, preserving perf. benefits under writes, and not duplicating the storage
% - Achieving the properties above is challenging because [...]

% - Introduce JetStream, first by highlighting the performance results
% - Then point back to Table 1 as we go
% - Describe correctness guarantees -> same as GenDB, Bespoke OLAP - empirical
%   testing, which we argue is no worse than existing systems that also rely on
%   testing

% - Give summary of key results, contributions list
% This generality, however, leaves performance on the table.

Relational database systems are general-purpose: they support diverse schemas and data and can execute any query expressible in their SQL dialect. This generality costs performance. An aggregation, for example, is generally implemented with a hash table, but if the set of groups is known ahead of time, a direct-mapped array eliminates hash probing altogether. Knowledge of a query workload and its data enables such specialized execution strategies, which we call \emph{accelerators}. Historically, building accelerators required substantial engineering effort and sacrificed generality, so they were built only for performance- or revenue-critical workloads~\cite{scuba-abraham13, napa-agiwal21, f1lightning-yang20}. Recent advances in LLM code generation~\cite{openai2025codex, anthropic2025claudecode, wang2024openhands} change this trade-off by making it practical to construct accelerators for known workloads automatically~\cite{gendb, gptdb-trummer23, bespoke-olap}. This raises a natural question: can we obtain the benefits of specialization without giving up the generality and practicality of a conventional database system?

GenDB~\cite{gendb} and Bespoke OLAP~\cite{bespoke-olap} take a step in this direction by synthesizing query engines that specialize the storage layout, data structures, and query implementations to a known workload. However, their design decisions limit their practicality (\Cref{tbl:qual-comparison}). First, they rewrite the base data into a layout chosen for the workload, a re-encoded copy that acts like a simple materialized view of the tables it reads, which forces users to migrate it, complicates writes, and may require rewriting it again when the workload changes. Second, their data structures assume static data and lack incremental maintenance, so after writes they either return stale results or lose their benefits. Third, GenDB does not support ad-hoc queries at all. Fourth, both validate correctness only by testing on sampled parameters and data, which cannot rule out errors on inputs outside the tested distribution. A practical system should instead retain the generality of a conventional DBMS, supporting writes and ad-hoc queries over its native storage, while accelerating known, pre-declared query templates.

Three challenges make this difficult. First, \emph{reward hacking}~\cite{metr-2025-recent-reward-hacking}: coding agents can modify the system and invoke their own tests and benchmarks, so apparent improvements may come from weakened correctness checks or measurement artifacts. Second, \emph{engine integration}: engines expose different scan APIs, mutation paths, and transaction mechanisms, and encoding these details into every accelerator would couple it to one engine and force synthesis to rediscover the same integration logic repeatedly. Third, \emph{verification}: formal verification offers the strongest correctness guarantee, but the optimizations that make accelerators fast also make them harder to prove correct.

We present \textbf{\thesystem{}}, which synthesizes query-specific accelerators and integrates them within an existing database system. For each query template, \thesystem{} generates an accelerator that runs directly over the DBMS's native storage, and may maintain auxiliary state. This auxiliary state acts as a far more capable materialized view. Rather than a relation the engine must still scan and aggregate, it uses data structures chosen for the query, such as prefix sums over the parameters' domains, and generated code maintains it incrementally instead of through the more costly synchronous view maintenance. The DBMS, then, remains responsible for storage, transactions, and recovery. Queries that do not have accelerators will run as before, and accelerators can be added and removed without regenerating the entire system. \thesystem{} builds on Tailwind \cite{tailwind-yu26}, which integrates hand-written accelerators into existing systems. Unlike Tailwind, \thesystem{} instead generates the accelerators and their maintenance logic, addressing each of the challenges mentioned above. To prevent reward hacking, an external orchestrator owns every test and benchmark. For engine integration, accelerators are all written against a fixed, engine-neutral substrate, and \thesystem{} generates one backend adapter per engine to handle its scans, write capture, and transactions. \thesystem{} offers two kinds of stateful accelerators. \textbf{Tested} accelerators may use any optimization and are validated by differential testing against the host, which covers only the inputs tested with a wide range of read and write workloads. To address the challenges of verification, \textbf{verified} accelerators are written in a constrained programming model, and are proved against a trusted, frozen specification generated deterministically from the SQL.    

As shown in Figure~\ref{fig:headline}, \thesystem{} provides both aggressive specialization and practical update support. On TPC-H at SF=20, stateful \thesystem{} achieves a $584\times$ geomean read-only speedup over DuckDB, compared with $27\times$ for Bespoke OLAP and $3.7\times$ for GenDB, and verified accelerators reach $89\times$. Under a TPC-H refresh stream of about 10K modified rows per second, \thesystem{} retains a $139\times$ geomean workload speedup ($46\times$ verified), while systems without write maintenance can at best fall back to the performance of the base DBMS. On SEC-EDGAR~\cite{gendb}, a workload built from real financial filings and unlikely to appear in LLM training data, \thesystem{} achieves $225\times$ and $5335\times$ geomean read-only speedups over DuckDB and PostgreSQL, respectively. And, against extensively tuned materialized views maintained by pg\_ivm in PostgreSQL, \thesystem{} sustains $24\times$ the write rate while running up to $28.5\times$ faster and using 39\% less memory.

In summary, we make the following contributions:
\begin{itemize}[leftmargin=*]
\item \textbf{A profiling-driven agentic workflow for generating accelerators that extend an existing DBMS.} It separates accelerator generation from evaluation and produces highly optimized accelerators that run over the DBMS's native storage while it continues to serve other ad-hoc queries.
\item \textbf{An engine-neutral substrate and a framework for write-aware maintenance of auxiliary state.} The substrate decouples accelerators from engine-specific integration, and the framework incrementally updates, rebuilds, or invalidates auxiliary state to keep it transactionally consistent with the base data under writes.
\item \textbf{A methodology for generating verified accelerators.} \thesystem{} generates accelerators with machine-checked proofs that every answer they return satisfies a trusted specification derived from the SQL, retaining substantial speedups while providing a stronger correctness guarantee than differential testing.
\end{itemize}

\section{System Overview}
\label{sec:system-architecture}

\begin{figure*}[t]
    \centering
    \includegraphics[width=0.98\textwidth]{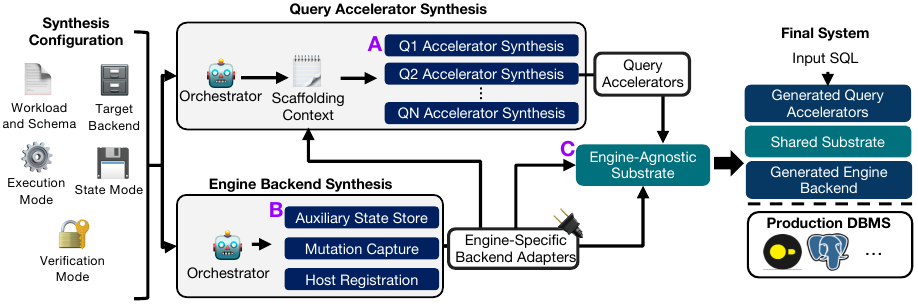}
    \caption{\thesystem{} system overview. Given a synthesis configuration, \thesystem{} synthesizes query accelerators (A) and engine-specific backend adapters (B), both against a fixed engine-agnostic substrate (C). The resulting system can then execute accelerated queries inside an unmodified production DBMS.
    (Section~\ref{sec:system-architecture}).}
    \label{fig:system}
\end{figure*}

\thesystem{} accelerates a known set of queries while leaving the underlying DBMS in charge of everything else. Figure \ref{fig:system} shows how it constructs such a system. 

\sparagraph{Inputs.} A user provides a synthesis configuration consisting of the workload (a schema and a set of parameterized query templates), the target DBMS, and three modes: the \emph{execution mode} (single- or multi-threaded), the \emph{state mode} (stateless or stateful), and the \emph{verification mode} (differentially tested or verified).

\sparagraph{Synthesis.} \thesystem{} runs two agentic pipelines. The first generates an accelerator for each template (\Cref{fig:system}A, \Cref{sec:workflow}). The second generates a backend for the target DBMS (\Cref{fig:system}B, \Cref{sec:backend-synth}): an auxiliary state store, mutation capture, and host registration. Both are built against the fixed, engine-agnostic substrate (\Cref{fig:system}C, \Cref{sec:substrate-maintenance}), which is the only interface between the query-specific code and the engine-specific code. 

% Orchestrators control the execution of both pipelines. In the query accelerator synthesis pipeline, the orchestrator assembles scaffolding context, which describes the substrate interface, the runtime backend adapters, the workload and configuration, and then synthesizes an accelerator for each template concurrently (\ref{sec:workflow}). In the engine backend synthesis pipeline, the orchestrator implements the substrate's engine-specific services: an auxiliary state store for accelerator state, mutation capture, which observes the rows that committed transactions, and host registration, which hooks accelerators into the DBMS so that matching queries are routed to them. Because accelerators depend only on the substrate, the same accelerators can be deployed on a different DBMS by synthesizing a new backend. 
% \todo{idk if this is realistic or necessary}

\sparagraph{Stateless and stateful accelerators.} Each accelerator is a table-valued function whose arguments are the template's parameters. Stateless accelerators compute each result from base-table scans, specializing only transient data structures and execution strategy. Stateful accelerators also keep the auxiliary state described in \Cref{sec:intro}, most commonly partial aggregates organized by the template's parameters, such as interval trees over a date predicate, so a query only combines a few precomputed entries at run time.

\sparagraph{Maintenance under writes.} For a stateful accelerator to stay correct, its auxiliary state must reflect committed base data. The backend captures each transaction's writes, and the substrate routes them to every affected accelerator, whose generated maintenance logic updates its state incrementally or rebuilds it, as a cost model decides
(\Cref{sec:cost-model}). State is versioned, so each query only reads the
version that matches its snapshot of the base data.

% \todo{verus citation}
\sparagraph{Tested and verified accelerators.}
Tested accelerators are validated by differential testing against the
host, which covers only the inputs it samples. Verified accelerators
instead carry a machine-checked proof, written in Verus~\cite{verus},
that every answer meets a specification derived from the SQL, under
explicit contracts on the host (\Cref{sec:verification}); proofs favor
simpler designs, so users trade speed for a stronger guarantee.

\section{Agentic Pipelines}
\label{sec:workflow}
% \begin{itemize}[leftmargin=*]
%   \item Discuss agentic workflow
%   \item Discuss system architecture
%   \item Describe key idea for keeping auxiliary state up-to-date
%   \item Describe correctness assumptions
%   \item Probably: more detailed discussion about handling writes to the
%     auxiliary state
% \end{itemize}

% \noindent
% \gy{The system overview and details section will likely change. We should
% probably just start by writing out everything that we deem interesting and then
% we can work on distilling/reorganizing it into sections/subsections for the
% paper.}

Accelerator synthesis is a multi-stage design problem with delayed, noisy feedback. Early choices of execution strategy and state constrain the later optimizations, but their quality only shows once a concrete accelerator is measured. As we later show in \Cref{sec:eval}, a single agent loop handles this poorly. It struggles to separate design from implementation problems \cite{cai2026forkandflushescapingideabasins}, and, in harder cases, edits the tests and DBMS code it is judged by. \thesystem{} instead structures accelerator generation as the workflow shown in Figure \ref{fig:agent-synthesis}, which separates high-level design, implementation, optimization, and audit into distinct stages. This section describes the agent pipeline for generating \textit{tested} accelerators and the engine backends.

\begin{figure*}[t]
\centering
\includegraphics[width=\textwidth]{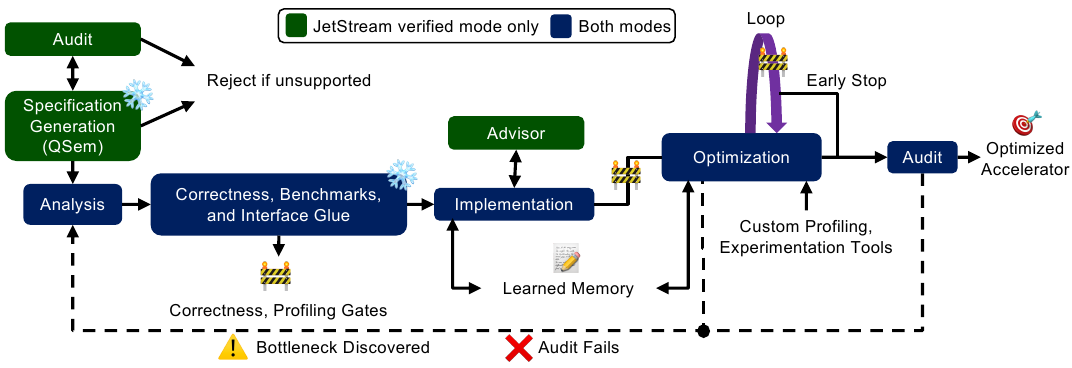}
\caption{\textbf{\thesystem{}'s accelerator synthesis pipeline.} Each box is a stage run by a separate agent. Blue stages run for all accelerators, and green stages only for verified ones (\Cref{sec:verification}). Frozen artifacts cannot be changed by later agents, and gates mark the checks a candidate must pass to advance.}
\label{fig:agent-synthesis}
\end{figure*}

\sparagraph{Staged Accelerator Synthesis.}
\label{sec:staged-synthesis}
Each stage in Figure \ref{fig:agent-synthesis} is executed by a separate agent instance with stage-specific context, instructions, and tools. The analysis stage proposes several materially different designs. Each specifies an execution strategy and, for stateful accelerators, the auxiliary state and how to maintain it. \thesystem{} selects the candidate with the greatest expected read-path benefit whose maintenance cost grows at most linearly with scale factor, and an implementation agent builds it against the substrate (\Cref{sec:substrate-maintenance}). Before optimization begins, the orchestrator constructs the differential tests and benchmarks and keeps ownership of them, so later stages cannot weaken correctness checks or tune the benchmark to favor a candidate. After optimization converges, a separate audit agent tests beyond what earlier stages saw. If the audit fails, or optimization finds that the design has reached a structural limit, the pipeline returns to analysis; otherwise, the accelerator is finalized.

\sparagraph{Evidence-Guided Optimization.}
\label{sec:evidence-optimization}
Each optimization attempt is judged by measurement: the orchestrator
builds the candidate, runs the correctness gates, and A/B benchmarks it
against the best validated version, which it replaces only if it is
correct and improves the objective. When aggregate latency hides the
bottleneck, the agent can add phase timers and query-specific counters
to separate the costs of base scans, state lookups, decoding,
synchronization, and maintenance. For stateful accelerators, the
objective covers both reads and writes, and the orchestrator varies the
write frequency across attempts so the agent does not overfit to one
trade-off. Between attempts, \thesystem{} compresses failed hypotheses,
bottlenecks, successful changes, and useful profiles into a learned
memory, so later agents avoid repeating work without carrying the full
history. Optimization stops early once attempts stop improving.

\sparagraph{Concurrent Workload Synthesis.}
\label{sec:concurrent-synthesis}
% - worker private state
% - explicit shared state
% - ownership-aware recovery
% - token-cost cutting measures like repo context preloaded, etc.
Each query's accelerator is synthesized by its own asynchronous worker,
which advances through the stages independently. All accelerators
compile into one shared library, so each worker has a private worktree,
build directory, and agent runtime, and edits to shared registration
files go through a lease. When a failure comes from shared or peer
files, the orchestrator restores them from the last validated snapshot
rather than having the agent debug foreign code. Workers also receive a
compact summary of the integration points and existing accelerators, so
they spend fewer turns exploring the repository.

\sparagraph{Engine Backend Synthesis.}
\label{sec:backend-synth}
% \begin{figure}[t]
%     \centering
%     \includegraphics[width=\columnwidth]{system_overview/backend_synth.pdf}
%     \caption{\thesystem{}'s engine-specific adapter synthesis.}
%     \label{fig:backend-synthesis}
% \end{figure}
Every accelerator for an engine relies on the same integration logic: base-table scans, write capture, transaction hooks, and isolation of uncommitted changes. This is also where errors are most dangerous, since a missed write or a visibility bug silently yields stale results. \thesystem{} therefore generates this logic once per engine, as a backend behind the substrate interface (Figure~\ref{fig:system}B). Because every accelerator on that engine shares the backend, it can be validated more thoroughly than any single accelerator, per-query agents stay confined to query logic, and \thesystem{} can target a user's existing DBMS without handwritten integration.

\section{Substrate and Auxiliary State Maintenance}
\label{sec:substrate-maintenance}
% - engine-agnostic substrate
% - engine adapter generation
% - stateless over native DBMS storage, stateful over new interface

% \begin{figure}[t]
%     \centering
%     \includegraphics[width=\columnwidth]{system_overview/shared_substrate.pdf}
%     \caption{The fixed substrate routes captured writes to affected
%     accelerators, plans and publishes maintenance at commit, and serves
%     reads from versioned auxiliary state.}
%     \label{fig:substrate}
% \end{figure}

The substrate is the fixed layer between generated accelerators and the generated engine backend (Figure \ref{fig:system}C). It keeps each accelerator's auxiliary state consistent with committed base data using standard techniques, so we describe only what the accelerators and the cost model rely on.

\subsection{Capturing and Publishing Maintenance}
\label{sec:maintenance}

% \sparagraph{Accelerator interface.}
As discussed in \Cref{sec:system-architecture}, each stateful accelerator is built against the engine-neutral substrate. Each stateful accelerator declares the tables and columns it reads, builds its state from base tables, and translates captured writes into query-specific changes (e.g., signed contributions to affected groups). For a chosen maintenance action, it produces the next version of its state, and it provides a fallback that computes its result directly over base tables. To localize maintenance, we allow \thesystem{} to divide its state into logical \textit{partitions} by a key computed from row values. One example of this from TPC-H Q4. It keeps order counts per priority, and partitions them by order month. An update that moves an order from March to April affects exactly only those two partitions.

\sparagraph{Write capture and commit.}
The backend translates each engine's write and transaction events into a common format, which includes the table, the operation, and the old and new values of the columns the accelerators read. Before the engine commits a transaction, the substrate reserves a place in the commit order for every accelerator that reads a modified table. After the commit, every affected accelerator computes its next state and publishes it in commit order. Maintenance itself does not change the outcome of a transaction. If the transaction aborts, its captured changes are discarded and the reserved place for it in the commit order is released.

\sparagraph{Versioned state.}
\thesystem{} follows standard multi-version concurrency control \cite{bernstein-mvcc} for maintaining auxiliary state: each published state is an immutable version. This means that readers will not observe a partial state. Versions are tagged with the commit epoch from the host engine, so each query reads the newest version visible to its snapshot. This means that auxiliary state becomes visible to the transaction exactly when the corresponding base table changes do. The exception is a transaction's own uncommitted writes. Since no published version reflects them, a call from a transaction that has written to a table the accelerator reads uses the fallback over base tables. A reader with a snapshot that includes a pending commit waits for it to be published. Readers with older snapshots never wait, since the version they need is already published. Because each accelerator has its own commit lock, a commit blocks only readers of the accelerators it affects.

\subsection{Maintenance Actions and Cost Model}
\label{sec:cost-model}

When a transaction changes rows an accelerator reads, a planner chooses how to bring its state up to date. It can \textit{apply the delta}, updating the state incrementally from the transaction's changes; \textit{rebuild the affected partitions} from the base tables; \textit{rebuild everything}; or \textit{defer}, marking the state invalid so that the next reader triggers a rebuild. The planner picks the action with the lowest estimated cost, counting on both the work done at commit and the work pushed onto reads before the next write. It estimates these costs (in wall clock time) with simple linear models whose coefficients are calibrated offline per accelerator and engine. They are summarized:
\begin{align*}
C_{\mathrm{delta}} &= c_i + c_d + c_r R,
  & C_{\mathrm{all}} &= c_i + c_f,\\
C_{\mathrm{part}} &= c_i + c_{p_0} + c_p P,
  & C_{\mathrm{defer}} &= c_i + \min(1, Q)\, c_f,
\end{align*}
where $R$ is the number of rows the transaction changed, $P$ the number of affected partitions, $c_i$ the cost of installing a new version, $c_f$ the cost of a full rebuild, and $Q$ the expected number of reads before the next write. \thesystem{} estimates $Q$ online as the average number of reads between consecutive writes over a sliding window. Deferring pays $c_f$
only if a read arrives before the next write, so it wins only when
writes outnumber reads ($Q < 1$).
% Costs within 1 ms of each other count as ties, which favor immediate actions.

\section{Verified Accelerators}
\label{sec:verification}

Differential testing shows only that an accelerator is correct on the inputs that were sampled. \thesystem{} can also generate \textit{verified} accelerators, which carry a machine-checked proof that they compute what their SQL query means (according to a trusted translator), relative to explicit contracts on the host DBMS and interface that connects it to the accelerator. Informally, if the DBMS supplies correct inputs and applies its commits in order, then for every parameter value and every committed database state, any answer a verified accelerator returns is correct (Theorem \ref{thm:verif}). 
% A second machine-checked guarantee limits when an accelerator may decline a call and leave the query to the host DBMS: on any database the DBMS can store, it declines only when no correct answer exists (the SQL itself has no result, e.g., a zero divisor) or none fit the result type (e.g., integer overflow). \Cref{sec:verif-guarantee}
% lists these cases and lists everything that the guarantees trust. 
Every step of this argument is machine-checked, relative to the input and commit-order contracts, which appear in Verus as trusted preconditions. \Cref{tab:verif-chain} summarizes what is checked and what is trusted. Verified accelerators currently run single-threaded within execution. 
% \Cref{tab:verif-chain} provides a walk-through of what is proved and assumed in our verification model.

\begin{table}[t]
\setlength{\tabcolsep}{3pt}
\renewcommand{\arraystretch}{0.95}
\begin{tabularx}{\columnwidth}{@{}>{\raggedright\arraybackslash}X l l@{}}
\toprule
\textbf{Component} & \textbf{Source} & \textbf{Status} \\
\midrule
Spec $\mathsf{Q}$, input contracts, $\alpha$ & \qsem{} & Trusted \\
Spec sanity lemmas (\S\ref{sec:verif-spec}) & \qsem{} & Checked \\
\texttt{build}/\texttt{apply}/\texttt{execute} proofs & Agent & Checked \\
Writes keep state correct (L\ref{lem:seq}) & Authors & Checked \\
Reader gets right version (L\ref{lem:store}) & Authors & Checked \\
Version matches snapshot (L\ref{lem:snap}) & Authors & Checked \\
Answers are correct (T\ref{thm:verif}) & Agent & Checked \\
No spurious declines (T\ref{thm:total}) & Agent & Checked$^{*}$ \\
Host contract, $\tau$, engine model & Host & Assumed \\
Glue, scans, change capture & Agent & Assumed \\
Output formatting, fallback & Host & Trusted \\
DataFusion, Verus, SMT, \texttt{rustc} & -- & Trusted \\
\bottomrule
\end{tabularx}
\caption{What the verified guarantee rests on (L/T: Lemma/Theorem).
\emph{Checked}: by Verus. \emph{Assumed}: required of the host, stated as
trusted preconditions, tested but not proved. \emph{Trusted}: taken as
correct. $^{*}$Given that every storable database lies in the envelope.}
\label{tab:verif-chain}
\vspace{-2em}
\end{table}

\sparagraph{Approach.}
Our machine-checked proofs have three parts:
\begin{enumerate}[leftmargin=*]
\item a formal statement of what the query means (Section~\ref{sec:verif-spec});
\item a contract that describes the data the DBMS gives the accelerator and what each of the accelerator's operations must establish (Section~\ref{sec:verif-boundary}), extended to updates and concurrent readers (Section~\ref{sec:verif-stateful});
\item a proof that the accelerator's code meets that contract.
\end{enumerate}

\qsem{}, a translator we built, generates the first two mechanically from
the SQL, and we wrote and verified the versioned store once,
independently of any workload. The agentic pipeline writes only the third part and cannot modify (1), (2) or the store, so an agent can change \textit{how} something is proven
but not \textit{what}. This means that a flawed or incorrect accelerator fails verification rather than returning wrong answers. We write accelerators and proofs in Verus~\cite{verus}, a verifier for Rust whose annotations are checked at compile time by an SMT solver and
erased before execution, so proofs themselves add no runtime cost.

\sparagraph{Generation Pipeline.}
Verified accelerators are generated by the same staged pipeline as tested ones, with additional stages, as shown in \Cref{fig:agent-synthesis}. Everything that defines what is proven, or that is trusted, is written and frozen before any agent writes the code that must satisfy it. \qsem{} generates the query specification; if the query template is unsupported, then the pipeline does not continue. As with the {tested} accelerators, an analysis agent designs the accelerator's state and inputs, which a critic agent reviews. From this design, \qsem{} generates the external-facing contract: what the accelerator may assume about the data the host DBMS supplies. Three agents then work in parallel to audit that contract, write benchmarking suites, and write the glue that connects the host DBMS to the to-be-implemented Verus kernel. Finally, in one session, the agent writes the accelerator, tests it, and proves it. A candidate is only accepted if it passes Verus verification, the frozen files are unchanged, and the code contains no proof-weakening construct like \texttt{assume}. Stalling proofs are aided by a more powerful advisor agent that can prevent the proof agent from falling into unproductive idea sinks. Agents can also contribute general lemmas to a shared proof library, helping accelerate proof-writing for other accelerators.

\subsection{Specifying What a Query Means}
\label{sec:verif-spec}

A query's meaning must be stated over a model of the database, and we will provide a high-level overview of that model here. For each table $T$ that a query reads, \qsem{} defines a row type $R_T$ that contains only the columns the query reads. A model database $D$ assigns each table a finite bag $D.T$ of $R_T$ rows. Values are mathematical integers, byte strings, booleans, and doubles, with decimals and dates encoded as integers.
Integer and decimal arithmetic is exact, and rounding to the engine's result type is part of output formatting; \texttt{SUM} and \texttt{AVG} over doubles are specified up to a proven rounding bound. Text compares bytewise, and engine-specific behavior, such as \texttt{NULL} ordering and division by zero, comes from an engine profile. This model is part of the trusted base (\Cref{sec:verif-guarantee}).

\qsem{} uses DataFusion~\cite{datafusion} to parse the SQL and to resolve names and types. It takes the unoptimized logical plan, which keeps the query's original shape, and translates the plan clause by clause into Verus definitions. Figure~\ref{fig:q3-spec} shows the result for an example query, TPC-H Q3. The \texttt{FROM} and \texttt{WHERE} clauses become $T$, the bag of tuples in the cross product of the tables that satisfy the predicate. \texttt{GROUP BY} partitions $T$ into groups $G(k)$, and the full result contains one row for each non-empty group, with its aggregates computed over that group. \texttt{ORDER BY} requires that no row sorts strictly before a row that precedes it, and \texttt{LIMIT} takes a prefix. \textbf{The top-level definition is a relation
\(
  \mathsf{Q}(D, p, o),
\)
which holds exactly when the output sequence $o$ is a correct answer to the query on database $D$ with parameter values $p$.} $\mathsf{Q}$ here is a relation and not a function because SQL itself leaves some freedom: rows that tie under \texttt{ORDER BY} may appear in any order, and under \texttt{LIMIT}, which of the tied rows make the cut is not determined \cite{spja}.

\begin{figure*}[t]
    \centering
    \includegraphics[width=\textwidth]{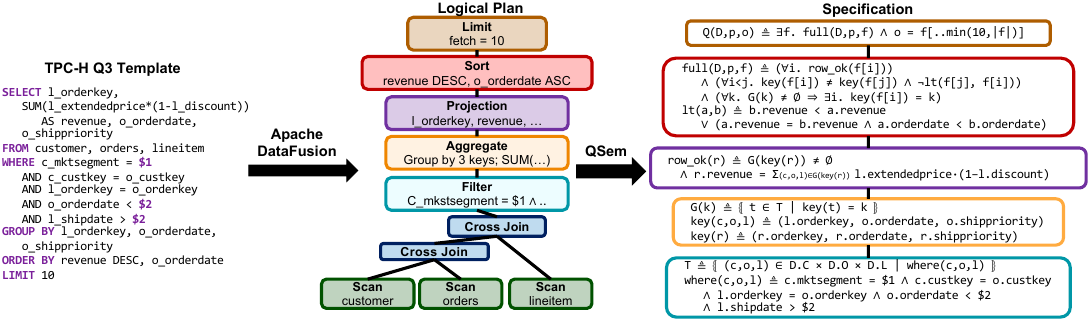}
    \caption{TPC-H Q3: the SQL template, the unoptimized logical plan that DataFusion produces, and the specification \qsem{} derives from the plan, one definition per plan node (shown in mathematical notation). $\mathsf{Q}(D, p, o)$ holds when $o$ is a correct answer on model database $D$ with parameters $p = (\$1, \$2)$. $D.C$, $D.O$, and $D.L$ are the \texttt{customer}, \texttt{orders}, and \texttt{lineitem} rows of $D$. Braces denote multisets, and $f[..n]$ is the first $n$ rows of $f$.}
    \label{fig:q3-spec}
\end{figure*}

Because $\mathsf{Q}$ is trusted, an error in its translation would silently change what every proof establishes. Some such errors, however, break properties that any correct query relation must have, and these can be checked mechanically. With each specification, \qsem{} therefore emits lemmas from fixed templates chosen by the query's structure. Verus checks them, so they add confidence in the translation without adding to the trusted base:
% \todo{more clear for most queries defined is trivial}
\begin{itemize}[leftmargin=*]
\item \emph{Existence}: $\mathsf{Q}$ has an answer on $(D, p)$ exactly when $\mathit{defined}(D, p)$ holds, where $\mathit{defined}$ characterizes the inputs on which SQL returns a result instead of raising a runtime error. Because \qsem{} refuses every other construct that can fail at runtime, such as scalar subqueries that may return several rows, the only runtime error it models is division by zero, on engines whose profile makes it an error.

\item \emph{Determinacy}: the full answer before \texttt{LIMIT} is unique as a bag, and it is unique as a sequence when \texttt{ORDER BY} orders every pair of output rows. This lemma rules out a specification that is too loose, for example, one that accepts any subset of the groups. QSem omits this lemma for queries with floating-point aggregates, whose rounding bound deliberately admits more than one correct answer.
\end{itemize}

\qsem{} supports SPJA queries~\cite{spja} with correlated subqueries,
CTEs, \texttt{DISTINCT}, \texttt{UNION}, single-table outer joins, and
top-level \texttt{ORDER BY}/\texttt{LIMIT}.
It handles \texttt{NULL}s wherever two-valued logic agrees with SQL's
three-valued logic, and refuses everything else, such as \texttt{NOT}
over a possibly-\texttt{NULL} comparison or window functions, before any
accelerator code exists.
The specification, obligations, and glue are then hashed, and every
orchestrator-run verification checks the hashes to make sure the agentic pipeline did not modify them when writing the accelerator.

\subsection{The Accelerator Contract}
\label{sec:verif-boundary}

A verified accelerator is implemented as a Verus crate that defines a state and three operations: \texttt{build} constructs a state from base-table data, \texttt{apply} updates a state with one transaction's changes, and \texttt{execute} answers the query from a state. Because the auxiliary state cannot reconstruct the database it is meant to summarize, each auxiliary state also needs to carry a ghost copy of the model database it represents. In this section, we will refer to this as $s.\texttt{db}()$ for the state $s$. An agent-written representation invariant $\mathit{Inv}(s)$ relates the state's data structures (e.g., sorted indexes) to $s.\texttt{db}()$. Because $\mathit{Inv}$ is declared as a Verus type invariant, Verus proves it wherever a state is constructed or modified, and \texttt{execute} can rely on it without a precondition.

\sparagraph{Inputs (trusted).}
Verified code never reads tables directly. Instead, the analysis agent declares each accelerator's inputs as SQL statements, and \qsem{} turns each declaration into an input contract with a fixed meaning that the host must satisfy. A different agent could choose different inputs for the same query, but a poor choice can only make the proof fail, since the accelerator must still prove the fixed query specification ( \Cref{sec:verif-spec}) with exactly what it receives. There are three kinds of input:
\begin{itemize}[leftmargin=*]
\item A \emph{scan} projects one base table, optionally filtered, with no parameters, joins, or aggregation. The rows passed to \texttt{build} equal the scan's result on database $D$ as a bag.
\item A \emph{delta} gives one transaction's changes to a table as signed rows: $+1$ for an insertion and $-1$ for a deletion. An update is a deletion plus an insertion. The full changed rows take $D$ to $D'$ as bags. The actual delta passed at runtime can be a partial row, like its key. In this case, the full rows are just a ghost argument. 
\item A \emph{rescan} lets the accelerator fetch rows near a change instead of storing every column. The accelerator computes a set of keys from its state and the deltas and the host returns the rows of  $D'$ with those keys. 
\end{itemize}
The host reads these inputs from the DBMS and passes them as column arrays across a C interface that \qsem{} generates. A fixed function $\alpha$ maps machine values to model values, since the host passes fixed-width integers, scaled decimals, and byte arrays, while the specification is stated over mathematical integers, rationals, and byte strings. Because ghost arguments are erased, the interface cannot check the contracts. It assumes them on entry, and this is the only place they enter the proof. The declarations therefore determine what the DBMS and the host interface are trusted to deliver, which is why the interface is validated through testing before it is frozen (\Cref{fig:agent-synthesis}).

\sparagraph{Obligations (verified).}
From the input contracts, \qsem{} generates a fixed specification in Verus for each of the three operations (\texttt{build}, \texttt{apply}, \texttt{execute}). Each has two parts: a precondition that the operation may assume and a postcondition that its code must establish. The preconditions are the input contracts above, stated relative to the ghost database, so the database is a ghost argument to every operation. Each postcondition has two parts: a \emph{correctness} part, which constrains any result the operation returns, and a \emph{totality} part, which limits when it may \emph{decline}. We denote \textit{decline} as $\bot$, and it results in the host DBMS answering the query instead. Totality is stated in terms of an envelope, generated from the schema, which bounds each table's size and each value by the limits of the DBMS and the column's SQL type. The envelope therefore contains every database the DBMS can store, and it only limits when an operation may decline. The three specifications are:

\begin{itemize}[leftmargin=*]
\item \texttt{build} requires that its inputs are the scans of $D$. It ensures that any state $s$ it returns has $s.\texttt{db}() = D$, and that it declines only if $D$ lies outside the envelope.
\item \texttt{execute} requires only the type invariant on $s$, so its guarantee holds for any valid state, however that state was produced. It ensures that any output $o$ it returns satisfies $\mathsf{Q}(s.\texttt{db}(), \alpha(p), \alpha(o))$, and that it declines only if $s.\texttt{db}()$ lies outside the envelope or no correct answer fits the result types.
\item \texttt{apply} requires that its deltas take $s.\texttt{db}()$ to some $D'$ and that its rescans are of $D'$ for exactly the keys it requested. It ensures that any state $s'$ it returns has $s'.\texttt{db}() = D'$, and that it declines only outside the envelope or once $s$ has absorbed more changed rows since it was built than a fixed, predefined history bound. The proof needs to show that the requested keys cover the change.
\end{itemize}

\sparagraph{Invariants.}
The agent cannot choose this $s.\texttt{db}()$ freely, because the postconditions of \texttt{build} and \texttt{apply} tie it to the database the host supplied. The agent's freedom lies in $\mathit{Inv}$, which states what the concrete structures mean in terms of the ghost database.
The verified TPC-H Q14 accelerator, for example, divides the range of ship dates into a tree of dyadic blocks. Its invariant states that, for every block, the stored total and promotional revenue equal the corresponding sums over the joined \texttt{lineitem} and \texttt{part} rows of $s.\texttt{db}()$ shipped in that block. \texttt{execute} must then prove from $\mathit{Inv}$ alone that combining the blocks covering the requested date range satisfies $\mathsf{Q}$. A wrong or too weak $\mathit{Inv}$ makes the proof fail rather than an answer wrong, and because \texttt{build} and \texttt{apply} must establish it, it cannot claim more than the code actually maintains.

% \sparagraph{Declining.}
% Totality is proved. Every decline in an accelerator's code carries a proof that its totality clause permits it. The pipeline's requirement that accelerators never decline
% on the frozen tests only re-checks this on the test data. A declined \texttt{execute} is answered by the DBMS, and a declined \texttt{build} or \texttt{apply} leaves the state invalid until the host rebuilds it, which also resets the count of absorbed changes.

\subsection{Verified Stateful Maintenance}
\label{sec:verif-stateful}

The operations above are proved one call at a time. At run time, the
states of all verified accelerators live in a verified store, a
simplified counterpart of the substrate's versioned store
(\Cref{sec:maintenance}). Throughout, we identify a database with its
model under $\alpha$, restricted to the tables and columns an
accelerator reads.

\sparagraph{Store design.}
Three choices make the store's proof tractable. First, a single store holds every verified accelerator, so each commit installs one version that updates all affected accelerators at once. Second, versions are immutable. A writer prepares the next version outside the lock, calling \texttt{apply} or \texttt{build} on the newest version's states, and takes the write lock only to append it.
If the version it started from is no longer the newest, the append discards the prepared version and leaves a single version whose states are all invalid, so readers fall back until the auxiliary states are rebuilt. The store admits at most one pending commit at a time. Third, a reader selects a version for its snapshot under the read lock and then runs \texttt{execute} on it outside the lock. As a result, no accelerator proof has to reason about interleavings.

\sparagraph{Ghost state and lock invariant.}
The store numbers versions $0, 1, 2, \ldots$ in installation order, and its ghost state is a timeline $\tau$, supplied by the host when the store is created, where $\tau(v)$ is the database that version $v$ represents. Read queries, however, arrive with host epochs rather than version numbers, so each version $v$ also records the engine epoch $f_v$ from which it is current. 
% We show later, in Lemma \ref{lem:snap}, that a reader at epoch $E$ is given the newest version current at $E$. 
The store keeps a window of consecutive versions $b, \ldots, n$, ending at the newest version $n$. 
To bound memory, it prunes versions older than a host-supplied floor, and because it removes only the oldest, the window is always a suffix of the installed versions. 
The lock invariant states that $f_b < \cdots < f_n$ and that every valid state in a kept version $v$ represents $\tau(v)$. Since the lock carries the invariant in Verus, it checks that every write preserves it before releasing the lock, so the readers can just assume it without reasoning about concurrent writers.

\sparagraph{Store operations.} The contracts of \Cref{sec:verif-boundary} are stated for an arbitrary
database $D$ and its successor $D'$; the store instantiates them along the timeline, with $D = \tau(v)$ and $D' = \tau(v{+}1)$. Three operations carry the store's guarantees:
\begin{itemize}[leftmargin=*]
    \item \texttt{prepare} packages the next version's states, computed from the newest version $v$. \textit{Precondition:} every valid state represents $\tau(v+1)$. \textit{Postcondition:} the prepared version shares the store's timeline.
    \item \texttt{append} installs a prepared version. \textit{Precondition:} the version shares the store's timeline. \textit{Postcondition:} the lock invariant holds. The version is installed as $v+1$ if $v$ is the newest. 
    \item \texttt{read} selects a version for snapshot epoch $E$.
   \textit{Precondition:} none.
   \textit{Postcondition:} if no pending commit may become visible at or
   before $E$, it returns a \emph{reading} of the newest kept version $v$
   with $f_v \le E$, whose valid states represent $\tau(v)$; otherwise,
   or if no kept version has $f_v \le E$, the reader falls back.
\end{itemize}

We now show what these contracts guarantee, first for writes and then for reads.
\textbf{Verus checks every lemma and theorem in this section}, so we omit their
proofs and sketch only the idea behind \Cref{lem:snap}.

\begin{lemma}[Sequential maintenance]
\label[lemma]{lem:seq}
If every call to \texttt{prepare} meets its precondition, then after any
sequence of commits, rebuilds, and aborts, every valid state $s$ in every
kept version $v$ satisfies $s.\texttt{db}() = \tau(v)$.
\end{lemma}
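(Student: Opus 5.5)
The proof is an induction over the sequence of store operations, taking the lock invariant as the induction hypothesis. We strengthen the claim to: at every point where the lock is released, the kept window $b,\ldots,n$ satisfies $f_b<\cdots<f_n$, and every valid state $s$ in every kept version $v$ satisfies $s.\texttt{db}()=\tau(v)$. Lemma~\ref{lem:seq} then follows at once, because after any finite sequence of commits, rebuilds, and aborts the store is in such a lock-released configuration.

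\emph{Base case.} The store is created with the timeline $\tau$ supplied by the host. Its initial version $0$ either contains states produced by \texttt{build} on the scans of $\tau(0)$ or contains only invalid states. In the first case, the postcondition of \texttt{build} gives $s.\texttt{db}()=D=\tau(0)$ for every state it returns. In the second case the claim is vacuous.

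\emph{Inductive step.} We split on the kind of step.
\begin{itemize}
\item \textbf{Abort.} The captured changes are discarded and no version is appended, so the window and its states are untouched.
\item \textbf{Pruning.} Pruning removes only the oldest versions. Every remaining version keeps its index and states, so the invariant restricts to the new suffix.
\item \textbf{Commit or rebuild.} The writer calls \texttt{prepare} from the newest version $v$. By hypothesis that call meets its precondition, namely that every valid state in the prepared package represents $\tau(v{+}1)$. Here the \texttt{apply} postcondition is instantiated with $D=\tau(v)$ and $D'=\tau(v{+}1)$, and the \texttt{build} postcondition with $D=\tau(v{+}1)$. The postcondition of \texttt{prepare} says the package shares the store's timeline, which is exactly the precondition of \texttt{append}. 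Under the write lock, \texttt{append} has two outcomes:
\begin{itemize}
\item If $v$ is still newest, the package is installed as $v{+}1$. Its valid states represent $\tau(v{+}1)$, and the older versions are unchanged.
\item If $v$ is stale, the prepared version is discarded and replaced by a single version whose states are all invalid. The representation clause is then vacuous for that version.
\end{itemize}
\end{itemize}

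The remaining conjunct, $f_b<\cdots<f_n$, holds because each newly installed version records a host commit epoch larger than its predecessor's, since commits are published in the reserved commit order.

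The main obstacle is bookkeeping the version index against the timeline, not the database reasoning itself. We must show that a version installed at position $v{+}1$ really is paired with $\tau(v{+}1)$ rather than with whatever database the writer had in mind. We must also show that the invalidation fallback in \texttt{append} does not shift indices, so that the single all-invalid version still occupies a well-defined slot $\tau(n{+}1)$ and later writers prepare against it consistently.

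I would handle this by carrying in the ghost state an explicit equality between the store's timeline field and the package's timeline field. That makes ``shares the store's timeline'' a syntactic equality Verus can substitute. I would also prove a small auxiliary lemma that every path through \texttt{append} increments $n$ by exactly one, or leaves it unchanged on abort, which leaves only the per-state representation clause to check at each step.
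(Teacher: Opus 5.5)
Your proposal is correct and is essentially the paper's argument: the statement is the store's lock invariant, established at creation by \texttt{build}'s postcondition and preserved because a prepared version carries \texttt{prepare}'s precondition into \texttt{append}, which installs it as $v{+}1$ only if $v$ is still the newest version and otherwise leaves only invalid states, while pruning keeps a suffix of the window. Two small remarks: first, the epoch-ordering conjunct you add is not needed for this lemma, and the paper has the store itself enforce installation in commit order rather than taking it from the host's commit order, which is not among the lemma's hypotheses; second, the paper's store may leave an accelerator's state invalid after an abort rather than untouched, which still preserves the invariant vacuously, just as in your stale-\texttt{append} case.
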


% \begin{proof}
% This is the store's lock invariant, checked by Verus.
% Creating the store establishes it from \texttt{build}'s postcondition.
% A prepared version carries \texttt{prepare}'s precondition with it, and
% \texttt{append} installs it as $v{+}1$ only if $v$ is still the newest
% version, so its valid states represent $\tau(v{+}1)$; otherwise
% \texttt{append} installs only invalid states.
% To meet \texttt{prepare}'s precondition, \texttt{apply} starts from a
% state of version $v$, which by the invariant represents $\tau(v)$, so the
% per-call obligations chain from version to version.
% \end{proof}

\Cref{lem:seq} holds for every choice between \texttt{build} and
\texttt{apply}, so the cost model that makes this choice
(\Cref{sec:cost-model}) affects only performance and need not be
trusted.
% Also, by the totality clauses, if every $\tau(v)$ lies in the accepted input envelope (\Cref{sec:verif-boundary}), then no call declines (falls back to base SQL).

\begin{lemma}[Snapshot selection]
\label[lemma]{lem:store}
Consider a reader at snapshot epoch $E$, and fix the window at the moment the reader holds the read lock.
If the reader is given version $v$, then:
\begin{enumerate}[label=(\alph*), leftmargin=*]
\item $b \leq v \leq n$, $f_v \leq E$, and $f_u > E$ for every $u$ with
 $v < u \leq n$, so $v$ is the newest installed version with
 $f_v \leq E$;
\item no commit is pending with $\hat{f} \leq E$; and
\item every valid state in $v$ represents $\tau(v)$.
\end{enumerate}
\end{lemma}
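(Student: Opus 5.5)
The plan is to derive all three parts from the postcondition of \texttt{read} together with the lock invariant, reasoning entirely about the window $b,\ldots,n$ as it stands while the reader holds the read lock. Holding the read lock means no writer can be inside \texttt{append}. So the window, the epochs $f_b,\ldots,f_n$, and the pending-commit marker are all fixed for the duration of the reasoning, and the lock invariant holds for them. The argument unfolds the implementation of \texttt{read}: a scan of the window from $n$ downward for the first version whose epoch is at most $E$, guarded by a check on the pending commit. We then discharge each clause as a postcondition of that code.

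For (a), the loop invariant will say that the scanned versions $u$ with $v<u\le n$ all satisfy $f_u>E$, and that the current candidate lies in $[b,n]$. When the loop returns $v$, the exit test gives $f_v\le E$, and the invariant gives $f_u>E$ above it. The remark that $v$ is the newest \emph{installed} version with $f_v\le E$ needs one more step, because versions older than $b$ have been pruned. I would close this with the invariant $f_b<\cdots<f_n$: every pruned version precedes $b\le v$, so it is not newer than $v$. Strict monotonicity also means the first hit in a downward scan coincides with the newest qualifying version in the window. If I instead implement \texttt{read} by binary search, monotonicity is exactly what justifies it.

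Part (b) comes from the guard. \texttt{read} returns a version only on the branch where the recorded pending commit (at most one exists, by store design) either is absent or has $\hat f>E$. Otherwise it falls back, and the hypothesis ``the reader is given version $v$'' excludes that case. Part (c) is the second conjunct of the lock invariant instantiated at $v\in[b,n]$. That conjunct is exactly what \Cref{lem:seq} established, so (c) follows by citing it.

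I expect the main obstacle to be the ghost-level bookkeeping that ties the reading to a fixed window. The reading is used outside the lock, so the fact that its valid states represent $\tau(v)$ must be carried as a property of immutable, shared version data. It cannot depend on the lock, because a later pruning or append could change $b$ and $n$. I would first try making each version carry its index and a ghost reference to $\tau$, with the immutability of installed versions stated as part of the invariant, so that (c) survives release of the lock. A second subtlety is the empty-window or no-qualifying-version case. This is vacuous here, because we assume a version was returned, but the loop's termination branch still has to be handled.
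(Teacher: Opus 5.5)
Your proposal is correct and follows the paper's argument: parts (a)--(c) are the postcondition of \texttt{read} (the window scan and the pending-commit guard) combined with the lock invariant, which a reader may assume while holding the read lock. One step is misattributed: extending (a) to \emph{all} installed versions follows from the window being a suffix of the installation history (the paper tracks this with a ghost log of every installed version's epoch), not from the monotonicity $f_b<\cdots<f_n$, which neither that step nor a downward scan needs.
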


% \begin{proof}
% This is the postcondition of \texttt{read}, checked by Verus. A ghost
% log of every installed version's epoch, of which the window is always
% a suffix, extends (a) to all installed versions.
% \end{proof}
   
\sparagraph{Connecting the store to the host DBMS.}
The lemmas above are stated over $\tau$. However, the store cannot actually observe the DBMS, so it cannot check that $\tau$ is the DBMS's history or that a reader's epoch reflects its snapshot. Verus checks the remaining steps against a ghost model of the engine that defines the database $D_E$ a snapshot at epoch $E$ sees, and the store itself enforces that versions are installed in commit order. What the store cannot check forms the host's \textbf{commit-order contract} (which appear in Verus as trusted preconditions): (i) each installed version $v$ carries the epoch $f_v$ at which its
commit becomes visible, and $\tau(v) = D_{f_v}$;
(ii) when a reader at epoch $E$ reads, every commit visible at $E$ is
either installed or marked pending;
(iii) an accelerator whose state is carried over into a commit's version
reads no table the commit changes; and
(iv) the states the host stages are produced by \texttt{build} or
\texttt{apply} from inputs that satisfy their contracts along $\tau$.

\begin{lemma}[Snapshot correspondence]
\label[lemma]{lem:snap}
Under the commit-order contract, if a reader at epoch $E$ is given version $v$, then $\tau(v) = D_E$, where $D_E$ is the database its snapshot sees.
\end{lemma}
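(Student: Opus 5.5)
We combine \Cref{lem:store} with clauses (i) and (ii) of the commit-order contract and reduce the statement to one fact about the engine model: a snapshot's database is determined by the newest commit visible to it. Fix a reader at epoch $E$ that is given version $v$. By \Cref{lem:store}(a), $f_v \le E$ and every installed $u$ with $v < u \le n$ has $f_u > E$. Using the ghost log of all installed epochs, of which the window is a suffix, we extend this to every installed version newer than $v$, not just those still kept. \Cref{lem:store}(b) adds that no pending commit has visibility epoch $\hat f \le E$.

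Next, we show that no commit with epoch in the half-open interval $(f_v, E]$ exists at all. Suppose some commit $c$ became visible at an epoch $e$ with $f_v < e \le E$. By clause (ii), when the reader held the read lock, $c$ was either installed or marked pending. It cannot be pending, by \Cref{lem:store}(b). So $c$ is installed as some version $u$, and by clause (i) $f_u = e$. Because the store installs versions in commit order, $f$ is strictly increasing over all installed versions. Hence $f_u > f_v$ forces $u > v$, and then the extended part (a) gives $f_u > E$, which contradicts $e \le E$. Conversely, if $c$ became visible at an epoch $e \le f_v$, it is already reflected in $D_{f_v}$.

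Finally, we use a monotonicity property of the ghost engine model: $D_E$ depends only on the set of commits visible at $E$, so if no commit becomes visible in $(f_v, E]$, then $D_E = D_{f_v}$. Clause (i) gives $\tau(v) = D_{f_v}$, which yields $\tau(v) = D_E$. Clauses (iii) and (iv) are not needed for the equality of databases. They matter only for \Cref{lem:seq}-style reasoning about which states are valid, which combines with this lemma downstream.

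We expect the main obstacle to be the no-intervening-commit step. It has to reason about the whole installation history rather than the current window, since pruned versions are gone from the window. It also needs the strict monotonicity of $f$ to hold across the invalidating append path, where a discarded prepared version is replaced by an all-invalid version. Our first approach is to keep a ghost sequence of every installed epoch as part of the lock invariant, prove it strictly increasing, and prove the window is always its suffix. The quantifier in (a) then lifts directly to the full history, and the engine-model lemma $D_E = D_{f}$ whenever no commit is visible in $(f, E]$ can be proved once by induction over the model's commit log.
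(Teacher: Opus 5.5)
Your proposal is correct and follows the paper's own argument. Clause (ii) together with \Cref{lem:store}(b) forces every commit visible at $E$ to be installed; the newest-version property of \Cref{lem:store}(a), lifted to all installed versions through the ghost log of epochs, places those commits at or before $v$; and clause (i) then gives $\tau(v) = D_{f_v} = D_E$. Your contradiction over the interval $(f_v, E]$ and your explicit engine-model step $D_E = D_{f_v}$ spell out what the paper's one-line sketch (and its Verus proof) leaves implicit.
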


\begin{proof}
Checked by Verus.
Intuitively, every commit visible at $E$ is installed or pending when
the reader reads (ii), none is pending (\Cref{lem:store}(b)), so all are
installed at or before $v$, and $\tau(v) = D_{f_v} = D_E$ by (i).
\end{proof}

% The host also passes a pruning floor. Because pruning keeps a suffix of the installed versions, a wrong floor can only make readers fall back. The store makes no liveness claim.

\subsection{Properties Proved}
\label{sec:verif-guarantee}

We now combine the per-operation obligations with the store's guarantees into two end-to-end results. The first is the correctness of every answer returned, and the second is a bound on when an accelerator may decline, or fall back to the host DBMS.

\begin{theorem}[Correctness]
\label{thm:verif}
Assume that every call into a verified accelerator satisfies the input contracts of \Cref{sec:verif-boundary}, and that the host satisfies the commit-order contract. Then for every reader whose snapshot sees $D_E$ and every parameter value $p$, if the accelerator returns an output $o \neq \bot$, then $\mathsf{Q}(D_E, \alpha(p), \alpha(o))$.
\end{theorem}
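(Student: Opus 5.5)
The plan is to chain three results: the store's guarantee about which version a reader receives, the snapshot correspondence, and the per-call postcondition of \texttt{execute}. Fix a reader at snapshot epoch $E$ and a parameter value $p$, and suppose the accelerator returns $o \neq \bot$.

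\textbf{Case split on the answer path.} First I would enumerate the ways a call can return a non-$\bot$ answer. There are two. If \texttt{read} falls back, or the selected state is invalid, or the reader's transaction has uncommitted writes to a table the accelerator reads, the answer comes from the host fallback. That path is trusted and lies outside the verified accelerator, so the theorem's claim about the accelerator returning $o \neq \bot$ concerns only the other path. On that path, \texttt{read} hands back a reading of some kept version $v$, and \texttt{execute} runs on a valid state $s$ from $v$.

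\textbf{Identifying $s.\texttt{db}()$ with $D_E$.} By \Cref{lem:store}(c), which itself rests on \Cref{lem:seq}, every valid state in $v$ satisfies $s.\texttt{db}() = \tau(v)$. \Cref{lem:seq} needs every call to \texttt{prepare} to meet its precondition. I would discharge this from contract clause (iv) together with the \texttt{build} and \texttt{apply} postconditions, using the input contracts along $\tau$. Clause (iii) covers carried-over states, whose database is unchanged on their restricted tables. By \Cref{lem:snap}, which uses clauses (i) and (ii) of the commit-order contract, $\tau(v) = D_E$. Hence $s.\texttt{db}() = D_E$, restricted to the tables and columns the accelerator reads.

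\textbf{Applying \texttt{execute}.} Since $\mathit{Inv}(s)$ is a Verus type invariant, \texttt{execute}'s only precondition holds automatically. Its correctness postcondition then gives $\mathsf{Q}(s.\texttt{db}(), \alpha(p), \alpha(o))$, which is $\mathsf{Q}(D_E, \alpha(p), \alpha(o))$.

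\textbf{Main obstacle: restriction and timing.} One delicate point is that $\mathsf{Q}$ mentions only the columns in each $R_T$. The identification of databases therefore has to be stated, and carried through the lemmas, modulo restriction to those columns. This is the convention fixed at the start of \Cref{sec:verif-stateful}, so the equality should hold syntactically in the model.

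The harder point is timing. The window is fixed only at the moment the reader holds the read lock, but \texttt{execute} runs outside the lock. Because versions are immutable and pruning removes only whole versions, the state $s$ the reader captured keeps its ghost $s.\texttt{db}()$. Correctness transfers through this immutability, not through the lock invariant at execution time. I expect making this handoff explicit in Verus to be the main obstacle: \texttt{read} must return the state together with a ghost witness of $s.\texttt{db}() = \tau(v)$, so the post-lock call can use it.
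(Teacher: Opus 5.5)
Your proposal is correct and follows the paper's own argument. The reader runs \texttt{execute} on the reading it was given, whose type invariant is its only precondition, and chaining \Cref{lem:store}(c), \Cref{lem:snap}, and \texttt{execute}'s correctness postcondition gives $\mathsf{Q}(D_E, \alpha(p), \alpha(o))$. Your handling of timing also matches the paper, which evaluates contract item (ii) at the moment the reader holds the read lock and relies on immutable versions so that \texttt{execute} can safely run outside the lock.
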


% \begin{proof}
% Checked by Verus.
% The reader runs \texttt{execute} on the reading it was given, which
% relies only on the reading's type invariant, so the host cannot
% substitute another state.
% If contract item (ii) held when the reader read, the output satisfies
% $\mathsf{Q}(D_E, \alpha(p), \alpha(o))$ by \Cref{lem:store,lem:snap} and
% \texttt{execute}'s obligation.
% \end{proof}

Correctness alone would also hold for an accelerator that always declines. The totality clauses of \Cref{sec:verif-boundary} exclude this:

\begin{theorem}[No spurious declines]
\label{thm:total}
Under the same contracts, for every database the DBMS can store, \texttt{build} never declines, \texttt{execute} declines only when the query has no answer or no answer fits the result types, and \texttt{apply} declines only after exceeding its history bound.
\end{theorem}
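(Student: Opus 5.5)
The plan is to reduce Theorem~\ref{thm:total} to the per-operation totality clauses of \Cref{sec:verif-boundary}, instantiated along the timeline $\tau$ exactly as in the proof of Theorem~\ref{thm:verif}. The key hypothesis is the footnote to \Cref{tab:verif-chain}: every database the DBMS can store lies in the envelope. We first note that the envelope is generated from the schema, bounding table sizes and values by the DBMS and SQL-type limits. Hence for every commit epoch, $D_{f_v} = \tau(v)$ (contract item (i)) is a storable database and so lies in the envelope. The argument then needs only to transfer each operation's totality postcondition to the concrete calls the host makes.

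\textbf{For \texttt{build}.} Contract item (iv) says the staged inputs are scans of some $\tau(v)$. The \texttt{build} precondition is therefore met, and its totality postcondition permits declining only if $\tau(v)$ lies outside the envelope. This case is excluded above, so \texttt{build} never declines.

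\textbf{For \texttt{apply}.} Item (iv) again gives that the deltas take $s.\texttt{db}()$ to $\tau(v{+}1)$ and that the rescans are of $\tau(v{+}1)$. Here we must know $s.\texttt{db}() = \tau(v)$, which is exactly Lemma~\ref{lem:seq} for valid states of the newest kept version. The postcondition then allows a decline only outside the envelope or past the history bound. The first is excluded, leaving the stated exception.

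\textbf{For \texttt{execute}.} The reader's state satisfies $s.\texttt{db}() = \tau(v) = D_E$ by Lemmas~\ref{lem:store}(c) and~\ref{lem:snap}. \texttt{execute}'s totality clause permits declining only if $D_E$ is outside the envelope (excluded) or no correct answer fits the result types. The existence lemma of \Cref{sec:verif-spec} says $\mathsf{Q}(D_E,\alpha(p),\cdot)$ is inhabited iff $\mathit{defined}(D_E,p)$. This lets us rephrase the remaining case as "the query has no answer (not $\mathit{defined}$, e.g.\ a zero divisor) or every answer overflows the result types," matching the statement.

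The main obstacle is making the envelope step rigorous rather than assumed. One must check that the schema-derived envelope really contains every storable database, including intermediate quantities that \texttt{build} and \texttt{apply} compute, such as partial aggregates held in fixed-width machine integers. If some state field can overflow even though every input value is in range, the per-operation totality proofs would have to decline, and Verus would reject them. So the real work sits inside each agent-written proof: bounding the sizes of auxiliary-state values by (table-size bound)$\times$(value bound) and choosing wide enough representations. At the theorem level, I would simply cite these per-operation clauses, as the table marks it Checked$^{*}$.

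Declines reported by the store itself are a separate issue: a stale \texttt{append}, a pending commit, or pruning. These are fallbacks rather than accelerator declines, so I would state explicitly that the theorem concerns only \texttt{build}, \texttt{apply}, and \texttt{execute}.
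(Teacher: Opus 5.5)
Your proposal is correct and matches the paper's argument: the theorem follows from the Verus-checked totality clauses of \texttt{build}, \texttt{apply}, and \texttt{execute}, together with the fact that every storable database lies in the envelope. That fact holds by the envelope's definition, so it is not really an obstacle; as you note, intermediate overflow is handled inside the checked per-operation proofs. Your routing through $\tau$, Lemmas~\ref{lem:seq}--\ref{lem:snap}, and the existence lemma is harmless but unnecessary, since the totality clauses already apply to any state whose ghost database is storable, and ``no correct answer fits the result types'' already subsumes ``no answer exists.''
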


% \begin{proof}
% The totality clauses of the three obligations are checked by Verus; every database the DBMS can store lies in the envelope by its definition.
% \end{proof}

\Cref{thm:total} covers the accelerator, not the store.
A reader can still fall back to the host while a commit that may be
visible at its epoch is pending, while its accelerator's state is
invalid (after a stale append, an abort, or a declined \texttt{build} or
\texttt{apply}, until the rebuild completes), or when its version has
been pruned.
These cases are part of \texttt{read}'s checked postcondition, so the
store falls back only in them.

The guarantees trust the specification (\qsem{}'s rules, engine
profiles, and DataFusion), the verifier toolchain, the host boundary
(C interfaces, glue, scans, change capture, output formatting, and
fallback), and the commit-order contract (\Cref{tab:verif-chain}). The accelerator's data structures, query execution, and maintenance logic are all verified.

\section{Evaluation}
\label{sec:eval}
% more crisp questions
% Our evaluation answers four questions:
% \begin{itemize}[leftmargin=*]
%   \item \textbf{Read performance.} How much faster does \thesystem{} answer analytical queries than its host DBMS and prior agentic systems, and how well does it generalize to workloads unlikely to appear in LLM training data? (Section~\ref{sec:read-performance})
%   \item \textbf{Writes.} Do \thesystem{}-generated accelerators retain speedups even under updates, and how does it compare with incremental view maintenance? (Section~\ref{sec:update-performance})
%   \item \textbf{Maintenance policy.} Does the cost-model selector choose the right maintenance action as the write pattern shifts? (Section~\ref{sec:update-performance})
%   % \item \textbf{Verification.} How much performance do verified accelerators give up compared with differentially tested ones, and where does that cost come from? (Section~\ref{sec:verification-cost})
%   \item \textbf{Generated designs.} How does \thesystem{}'s pipelines compare to state-of-the-art harnesses like Claude Code, and what optimizations does the pipeline discover? (Section \ref{sec:ablation}, Section~\ref{sec:generated-examples})
% \end{itemize}
We ask three questions: how much faster \thesystem{} answers queries than its hosts and prior synthesis systems (\Cref{sec:read-performance}), whether that advantage survives writes (\Cref{sec:update-performance}), and what the agent pipeline contributes and costs (\Cref{sec:ablation}, \Cref{sec:generated-examples}).

\subsection{Experimental Setup}
% \begin{itemize}[leftmargin=*]
%   \item Models used
%   \item Machines used
%   \item Datasets/workload description
%   \item Base database systems (DuckDB, maybe PostgreSQL)
%   \item Competitor agentic systems (BespokeOLAP, GenDB)
% \end{itemize}

We evaluate \thesystem{} on two workloads. The first is TPC-H~\cite{tpch} with all 22 queries, at SF=20 and, for the QphH experiment (\Cref{sec:update-performance}), at SF=30 and SF=3000. Each run draws fresh parameter bindings from the TPC-H substitution rules, so an accelerator cannot specialize to a single binding. The second is SEC-EDGAR at SF=3, built by GenDB~\cite{gendb} from real SEC financial-statement filings from 2022-2024. GenDB provides 1,000 SQLSmith-generated queries over it; we use six of them, with zero to three joins, a nested-aggregate filter, and an anti-join, to test whether the pipeline generalizes to a workload unlikely to appear in LLM training data. We describe the write workloads with the experiments that use them.

\begin{figure*}[t]
    \centering
    \includegraphics[width=\textwidth]{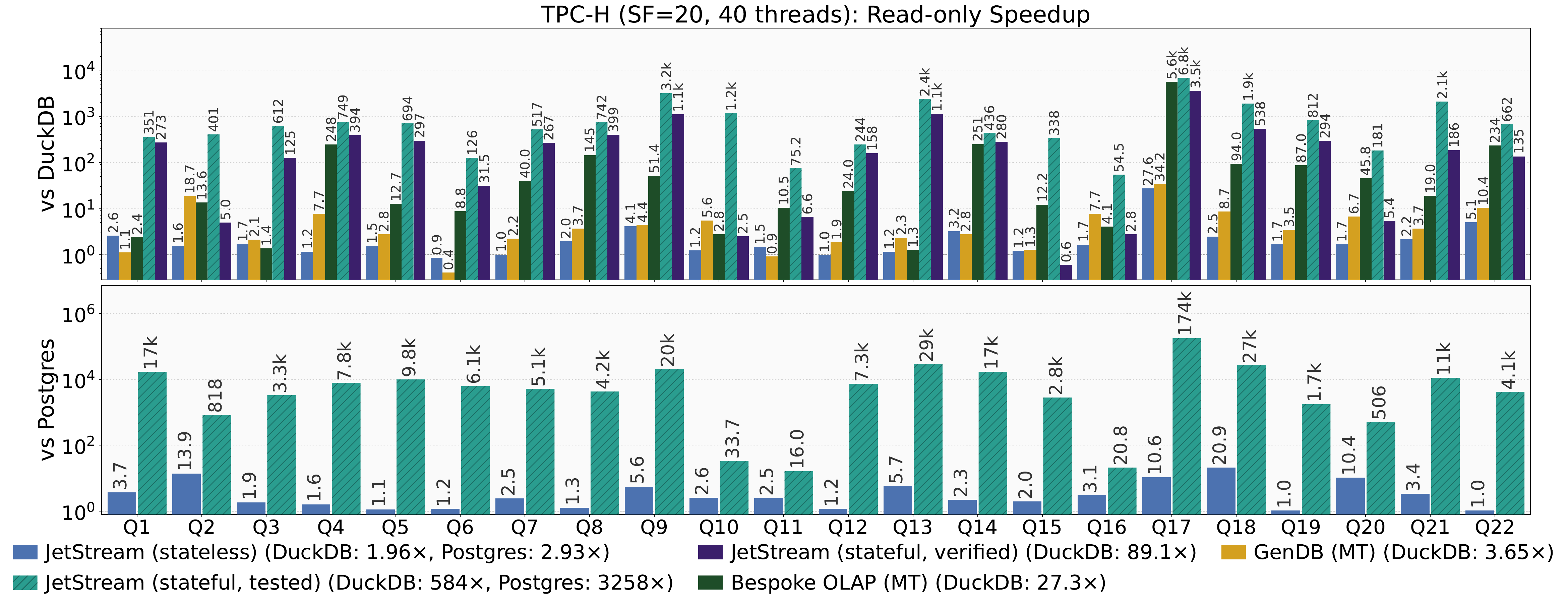}
    \caption{\textbf{\thesystem{} achieves large read-only speedups across TPC-H.}
    Per-query speedup at SF=20 (log scale). All systems use 40 threads except verified \thesystem{}, which runs single-threaded; legend values are geometric means over the 22 queries.
    \emph{Top:} speedup over DuckDB. Stateful \thesystem{} reaches $584\times$ and its verified
    variant $89.1\times$, compared with $27.3\times$ for Bespoke OLAP and $3.65\times$ for
    GenDB. Stateless \thesystem{} reaches $1.96\times$.
    \emph{Bottom:} speedup over PostgreSQL: $3258\times$ for stateful and $2.93\times$ for
    stateless \thesystem{}.}
    \label{fig:tpch-sweep}
\end{figure*}

We evaluate the stateless, tested, and verified variants of
\Cref{sec:system-architecture}, each synthesized for DuckDB v1.5.4 and
PostgreSQL 18.4, where \thesystem{} runs as an extension. We compare
against the hosts themselves; against GenDB~\cite{gendb} and Bespoke
OLAP~\cite{bespoke-olap}, which re-encode the data into a layout they
choose for the workload; and against materialized views maintained by
pg\_ivm 1.16 on PostgreSQL~17, the newest version it supports.

Experiments run on a dual-socket server with two Intel Xeon Gold 6230 CPUs (40 physical cores, 80 hardware threads), 376\,GiB of memory, and a 1\,TB NVMe SSD. Unless otherwise noted, every system uses 40 threads: DuckDB, PostgreSQL, GenDB, Bespoke OLAP, and \thesystem{}'s stateless and tested accelerators. Verified accelerators run single-threaded but are still compared against the 40-thread hosts. 

\subsection{Steady-State Read Performance}
\label{sec:read-performance}

We first measure read performance without writes, the setting GenDB and Bespoke OLAP were designed for (Figure~\ref{fig:tpch-sweep}).

\sparagraph{Stateless accelerators.}
Without auxiliary state, \thesystem{} can only optimize execution over the host's own storage. Stateless accelerators reach geomean speedups $1.96\times$ over DuckDB and $2.93\times$ over PostgreSQL. Both GenDB ($3.65\times$) and Bespoke OLAP ($27.3\times$) both do better, because they re-encode the data into custom format. \textbf{An accelerator working over the engine's own storage has little room left to improve, and most of our gains are from low-level execution optimizations, such as using SIMD instructions for filtering and using bitmaps for expensive operations like joins.}
% \todo{fix english}

% \todo{explain RF1, RF2}
\sparagraph{Stateful accelerators.}
Auxiliary state changes this. Stateful \thesystem{} reaches $584\times$ over DuckDB and $3258\times$ over PostgreSQL, and beats both prior systems on every query: GenDB by $7.1\times$ (Q16) to about $1{,}000\times$ (Q13), and Bespoke OLAP by $1.2\times$ (Q17) to about $1{,}800\times$ (Q13). The gain comes from removing work from the read path. A typical TPC-H query scans and joins much of the database to produce few rows, while a stateful accelerator maintains a parameter-independent summary of that work, keyed by the attributes the parameters range over. The tested Q14 accelerator, for example, keeps promotional and total revenue per ship month and answers a binding by reading one month. What remains at read time is mostly producing the result, so the smallest speedups belong to the queries with the largest outputs: Q16 returns 27,840 rows and Q11 returns 19,000. With little computation left, read latency is dominated by fixed per-query costs, such as dispatch through the host. Read latency changes by under 8\% between one and 40 threads, which is why single-threaded verified accelerators remain competitive. \textbf{Auxiliary state drives \thesystem{}'s advantage: stateful \thesystem{} outperforms both prior agentic system baselines on every TPC-H query.}

\sparagraph{Verified accelerators.}
Verified accelerators reach $89\times$ over DuckDB while running on a single thread against 40-thread DuckDB, and 15 of the 22 queries exceed $100\times$. Six queries (Q2, Q10, Q11, Q15, Q16, Q20) account for most of the gap to the tested accelerators, and verified Q15 is even slower than DuckDB ($0.6\times$). For these queries, the tested accelerator stores an answer-shaped structure, while the verified one keeps a more generic aggregate that is easier to prove correct under maintenance and finishes the query at read time (\Cref{sec:generated-examples} examines this trade-off for Q6). 
% Tested Q15, for instance, stores each month's top suppliers; verified Q15 stores revenue per supplier in 256-day bins and merges across all suppliers on every call (\Cref{sec:generated-examples} examines this trade-off for Q6). 
\textbf{Verified accelerators trail tested ones by $6.5\times$ in geomean, yet outperform Bespoke OLAP by $3.3\times$ and GenDB by $24\times$.}
% On the other 16 queries, the verified accelerators are $2.7\times$ slower in geomean: they run on one thread and read from persistent, reference-counted trees, which add allocation and pointer chasing to the read latency.

\begin{figure}[t]
    \centering
    \includegraphics[width=0.9\columnwidth]{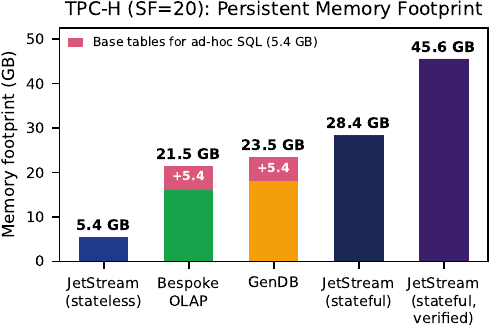}
    \caption{\textbf{Persistent memory footprint on TPC-H at SF=20.}
    Stateful \thesystem{} uses additional memory to maintain query-specific
    auxiliary state.}
    \label{fig:memory-footprint}
    \vspace{-3mm}
\end{figure}
% \todo{plot peak persistent memory}
\begin{figure}[t]
    \centering
    \includegraphics[width=\columnwidth]{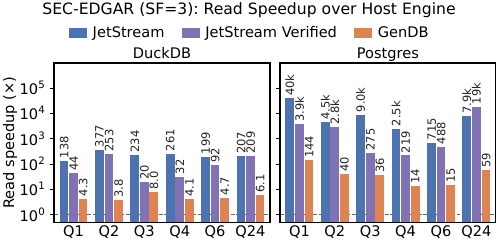}
    \caption{\textbf{\thesystem{} performance on unseen SEC-EDGAR queries} (SF=3, log scale).
    \thesystem{} achieves a $225\times$ ($72\times$ verified) geomean read-only speedup over DuckDB and $5335\times$ ($1350\times$ verified)
    over PostgreSQL, compared with $5.0\times$ and $37\times$ for GenDB.}
    \label{fig:sec-edgar-speedup}
    \vspace{-1.65em}
\end{figure}

\sparagraph{Memory footprint.}
These speedups cost space (Figure~\ref{fig:memory-footprint}). We count what each system must keep resident to answer arbitrary SQL. GenDB and Bespoke OLAP keep the base tables alongside their encoded copies (23.5\,GB and 21.5\,GB). Stateful \thesystem{} needs 28.4\,GB: its state adds 23\,GB over the stateless configuration and raises the geomean from $1.96\times$ to $584\times$. The verified stateful \thesystem{} needs 45.6\,GB. The extra 17 GB comes from favoring representations that are easy to prove correct over compact ones (\Cref{sec:generated-examples}). For example, rather than packing rows into flat arrays, the verified accelerators allocate each row and string separately on the heap. \textbf{For 5-7\,GB more than either prior system, it is $21\times$ faster than Bespoke OLAP and $160\times$ faster than GenDB.} 

\sparagraph{Generalization to SEC-EDGAR.}
\thesystem{} also generalizes well to other workloads like SEC-EDGAR (Figure~\ref{fig:sec-edgar-speedup}): $225\times$ over DuckDB against GenDB's $5.0\times$, and $5335\times$ over PostgreSQL against $37\times$. Verified accelerators, whose specifications \qsem{} generates for all six queries, reach a $72\times$ geomean speedup over DuckDB and $1,350\times$ geomean speedup over PostgreSQL. \textbf{\thesystem{}'s advantage carries over to a real-world workload that LLMs are unlikely to have seen, for both tested and verified accelerators.}

%   \item We need to compare with GenDB + BespokeOLAP, but should consider the
%   right setting. Maybe if the workload is write-heavy enough, we include it in
%   the headline figure.
% \end{itemize}
\subsection{Performance Under Updates}
\label{sec:update-performance}

We next evaluate whether \thesystem{} keeps its read advantage while the database changes underneath it. We run four experiments, each with its own write workload: a continuous TPC-H refresh stream, a comparison with incremental view maintenance, the official TPC-H composite metric, and a stress test of the maintenance policy.

\sparagraph{Continuous refresh stream.}
For each TPC-H query at SF=20, eight concurrent readers repeatedly run the query with parameters drawn from eight random sets, and each query's accelerator runs in its own process. One writer alternates TPC-H's two refresh functions~\cite{tpch}. After each refresh, the writer also updates rows in the tables RF1 and RF2 never touch, so queries that read only those tables, such as Q2, Q11, and Q16, also see the tables that they depend on, change. We vary the write rate from 2.5K to 30K modified rows/s. At the highest rate, about 1\% of the database changes every minute.

Read latency alone would overstate \thesystem{}'s benefit, since its writes must also maintain auxiliary state. We therefore charge each system for both. If DuckDB completes N reads during the run, the \textbf{workload speedup }is the time DuckDB needs for those N reads plus its commits, divided by the time \thesystem{} needs for the same N reads plus its commits and related maintenance: 
\[
  \text{speedup} = \frac{N \cdot \bar{r}_{\text{DuckDB}} + C_{\text{DuckDB}}}{N \cdot \bar{r}_{\thesystem{}} + C_{\thesystem{}}},
\]
where $\bar{r}$ is a system's mean read latency under the concurrent write stream and $C$ is its total commit time for the shared write stream.

\begin{figure}[t]
\centering
\includegraphics[width=\columnwidth]{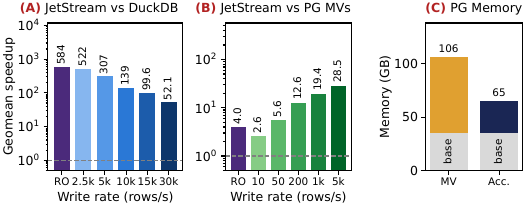}
\caption{\textbf{\thesystem{} keeps large speedups under continuous TPC-H refreshes} (SF=20, eight readers, one refresh writer).
\textbf{(A)}~Geomean workload speedup over DuckDB.
\textbf{(B)}~Geomean workload speedup over pg\_ivm materialized views in PostgreSQL.
\textbf{(C)}~PostgreSQL memory, including about 35\,GB of base tables.}
\label{fig:tpch-updates}
\vspace{-2em}
\end{figure}

\thesystem{} achieves a $522 \times$ geomean workload speedup at 2500 rows/s, $139\times$ at 10,000 rows/s, and $52\times$ at 30,000 rows/s (Figure \ref{fig:tpch-updates}A). Readers see consistent state while maintenance runs, so the decline comes entirely from maintenance, whose cost grows linearly with the write rate. Maintenance is also selective. About half of the writes for a given query need none, such as point updates to tables it does not read, and \thesystem{} skips them. \textbf{Even when 1\% of the database changes every minute, \thesystem{} remains $52\times$ faster than DuckDB.}

\sparagraph{Incremental view maintenance.}
IVM is the standard way to keep precomputed state fresh, so we compare against materialized views maintained by pg\_ivm in PostgreSQL. The views group by the attributes the parameters filter on, so, like \thesystem{}, they answer any parameter binding. 
% Q21 and Q22 use \texttt{NOT EXISTS}, which pg\_ivm does not support, so they are plain views refreshed inside every write transaction that touches their inputs. 
With Claude Code~\cite{anthropic2025claudecode}, we build the best MV setup for our baseline. This includes indexes on group keys, view definitions that pg\_ivm can maintain with index lookups, and no JIT for maintenance statements. Writes follow the refresh stream above, at 10 to 5K modified rows/s.

pg\_ivm maintains every view synchronously inside the writing transaction, and a single TPC-H-sized refresh commit takes 4 to 17 minutes. The views can therefore only sustain at most 241 modified rows/s, against 5.8K for our \thesystem{} extension. Beyond that rate, the views fall behind, so Figure~\ref{fig:tpch-updates}B reports workload speedup, which charges both systems for the write stream. The views' reads, however, still run against a lagging snapshot, which favors them. Even so, \thesystem{} tested accelerators are $4.0\times$ faster read-only, and its advantage grows to $28.5\times$ at 5K rows/s. It also uses less memory: 65\,GB against 106\,GB, both including about 35\,GB of base tables (Figure~\ref{fig:tpch-updates}C). \textbf{\thesystem{} sustains $24\times$ the write rate of pg\_ivm while running up to $28.5\times$ faster and using 39\% less memory.}

\sparagraph{TPC-H composite metric.}
We also measure the official TPC-H metric, QphH@Size \cite{tpch}. The power test runs RF1, the 22 queries in sequence, and RF2 in a single stream, and scores their geometric-mean time. The throughput test runs $S$ concurrent query streams alongside a refresh stream that executes $S$ RF1/RF2 pairs, and scores the time until the last stream finishes. We use $S=4$ at SF=30 and $S=8$ at SF=3000, as the official TPC-H specification states. QphH@Size is the geometric mean of the two scores. For \thesystem{}, every refresh includes maintenance of all affected accelerators. 

\begin{table}[t]
\caption{Improvement of \thesystem{} (tested, stateful accelerators) over DuckDB on the TPC-H composite metric.}
\label{tbl:qphh}
\vspace{-0.5em}
\small
\begin{tabular}{@{}lrr@{}}
\toprule
Metric & SF=30 & SF=3000 \\
\midrule
Power@Size      & $96.2\times$  & $79 \times$\\
Throughput@Size & $1.82\times$ & $1.63 \times$\\
QphH@Size       & $13.1\times$ & $11.4 \times$\\
\bottomrule
\end{tabular}
\vspace{-1em}
\end{table}

At SF=30, \thesystem{} improves Power@Size by $96.2\times$ and QphH@Size by $13.1\times$ over DuckDB, and at SF=3000, a database 100 times larger, it still improves them by $79\times$ and $11.4\times$ (\Cref{tbl:qphh}). Throughput@Size improves by only $1.6$-$1.8\times$, because the two tests are bounded by different work. Accelerators shrink the 22 query timings that dominate the power test to milliseconds. The throughput test ends only when the refresh stream finishes, however, so once queries become cheap, refreshes and their maintenance set the full duration. \textbf{On the standard TPC-H metric, \thesystem{} improves QphH@Size by $11$-$13\times$, and the improvement holds from SF=30 to SF=3000.}

\begin{figure}
    \centering
    \includegraphics[width=\columnwidth]{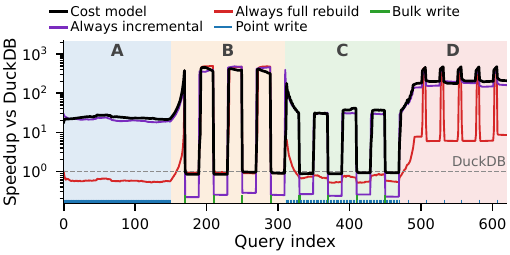}
    \caption{\textbf{\thesystem{} adapts its maintenance strategy to the write pattern} (TPC-H Q1, SF=20). Phases A-D are described in the text.}
    \label{fig:cost-model-phases}
    \vspace{-1.5em}
\end{figure}

\sparagraph{Maintenance policy.}
Point and bulk writes favor different maintenance strategies. To test the cost-model selector (\Cref{sec:cost-model}), we run Q1 620 times and interleave writes in four phases (Figure~\ref{fig:cost-model-phases}): Phase~A has a single-row point write before every query, Phase~B a bulk write every 40 queries, Phase~C the bulk writes of Phase~B plus a point write every other query, and Phase~D a read-mostly mix with a point write every 25 queries. We compare the selector with two fixed policies, always incremental and always full rebuild. Under point writes, the selector applies changes incrementally, while always rebuilding reconstructs the whole state for every small update. Rebuilding is $41\times$ slower than the selector in Phase~A ($0.57\times$ versus $24\times$ over DuckDB) and $30\times$ slower in Phase~D ($7.3\times$ versus $219\times$). Under bulk writes the trade-off reverses: applying a bulk batch's deltas takes 17--20\,s, while a full rebuild takes about 4\,s. In Phase~B, the selector therefore switches to rebuilding and matches the always-rebuild policy ($1.51\times$ versus $1.56\times$). Phase~C needs both choices, and there the selector is $2.4\times$ faster than always rebuilding and $3.6\times$ faster than always applying deltas. \textbf{The selector matches or beats the better fixed policy in every phase, and over the whole workload it is $2.7\times$ faster than DuckDB, while neither fixed policy beats DuckDB ($0.75\times$ and $0.96\times$).}

\subsection{Agent Pipeline Ablation}
\label{sec:ablation}

The pipeline assigns a model to each role. For our tested pipeline, DeepSeek-V4.1-Flash ~\cite{deepseekv41} implements and optimizes accelerators, GPT-6.1-Sol~\cite{gpt61sol} serves as the advisor, and Qwen3.8-Max~\cite{qwen38} performs audits and reviews. For verified accelerators, MiMo-V2.6-Pro~\cite{mimov26} writes the accelerators and their proofs. It is cheap enough for the many iterations proofs require but often stalls, and the GPT-6.1-Sol advisor that redirects it accounts for most of the cost. Synthesizing the DuckDB and PostgreSQL backends cost \$28.14 and \$7.73, respectively, or \$35.87 in total. The 22 TPC-H accelerators cost \$15.22 in the stateless variant, \$41.63 in the tested variant, and \$334.65 (about \$15 each) in the verified variant, and the six SEC-EDGAR tested accelerators cost \$10.92.

\begin{figure}[t]
    \centering
    \includegraphics[width=\columnwidth]{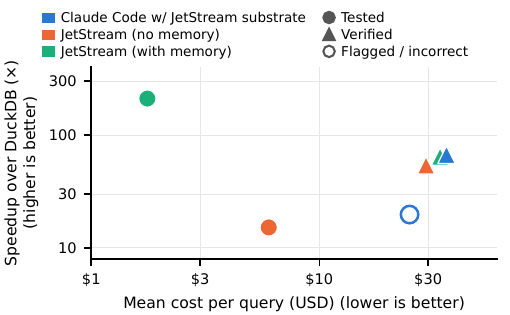}
    \caption{\textbf{Speedup over DuckDB against mean synthesis cost per query (for subset of queries)}, comparing \thesystem{} with and without learned memory and a single Claude Code agent (Opus 5.5) with the same \thesystem{} substrate.}
    \label{fig:ablation}
    \vspace{-1em}
\end{figure}

To see where this efficiency comes from, we synthesize a subset of TPC-H query accelerators (Q1, Q4, Q6, Q7, Q11, Q12, Q16) with three generators: \thesystem{}, \thesystem{} without its learned memory (\Cref{sec:evidence-optimization}) or
proof library (\Cref{sec:verification}), and a single Claude Code agent (Opus 5.5 ~\cite{opus55}) given the same substrate, tests, and acceptance criteria (Figure~\ref{fig:ablation}). For tested accelerators, \thesystem{} with learned memory reaches a $211\times$ geomean speedup over DuckDB at \$1.76 per query. Without memory, it reaches only $15.2\times$ at \$6.00, and the single agent reaches $19.8\times$ at \$24.87. The Claude Code single agent's run is flagged: it hardcoded accelerators for the parameter bindings in the exposed tests and failed on the hidden full set, which \thesystem{} prevents by having the orchestrator own the tests and the gates that run them. For verified accelerators, the three generators end up close together, at $54$-$67\times$ and \$29-36 per query on this harder subset, because proof effort dominates both cost and design. The single Claude Code agent is marginally fastest ($67.1\times$) at the highest cost (\$36.12). \textbf{For tested accelerators, the staged pipeline with learned memory is $11\times$ faster than a single agent at $14\times$ lower cost; for verified accelerators, the proof itself, not the pipeline, sets the cost.}

\subsection{Generated Accelerator Examples}
\label{sec:generated-examples}

\begin{figure}[t]
    \centering
    \includegraphics[width=\columnwidth]{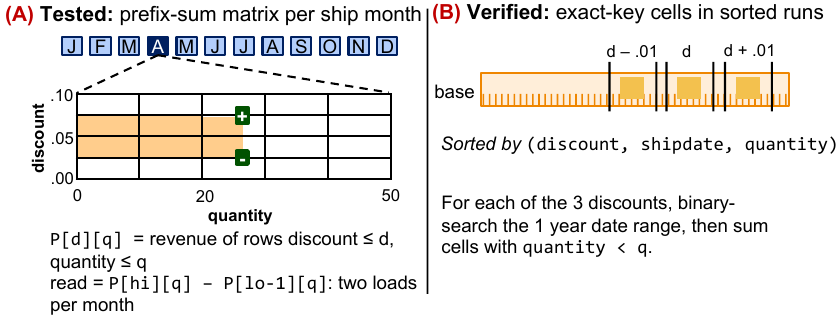}
    \caption{\textbf{Auxiliary state for TPC-H Q6.} (a) Tested: a prefix-sum matrix per ship month. (b) Verified: exact-key cells in sorted runs.}
    \label{fig:q6-aux-state}
    \vspace{-2em}
\end{figure}

We examine one stateless and one stateful generated accelerator (both tested and verified variants), on TPC-H \cite{tpch}. 

% To summarize the database briefly: its largest table, \texttt{lineitem} (about 6M rows per scale factor), holds one row per item in an order, with its quantity, price, discount, and ship date; \texttt{orders}, \texttt{customer}, \texttt{part}, \texttt{supplier}, and \texttt{partsupp} describe the orders, buyers, products, and suppliers, and \texttt{nation} and \texttt{region} are small lookup tables.

\sparagraph{Stateless TPC-H Q17.}
Q17 sums the price of line items whose quantity is below 20\% of their part's average, for parts of one brand and container. Only about 0.1\% of \texttt{lineitem} rows qualify, and the kernel exploits this: it records the qualifying part keys in a bitmap, then makes one pass over DuckDB's column arrays, probing eight keys at a time with AVX-512 gathers and computing the averages and the sum over the hits alone. It is $27.6\times$ faster than DuckDB, the largest stateless speedup.

\sparagraph{Stateful TPC-H Q6.}
Q6 sums revenue over one year of ship dates, a discount range $d \pm 0.01$, and a quantity bound $q$. The tested accelerator keeps a prefix-sum matrix of revenue over discount and quantity for each ship month, so a read costs two loads per month (Figure~\ref{fig:q6-aux-state}A). It assumes TPC-H's value domains (discounts 0--0.10, whole quantities up to 50) and declines any month with a row outside them. The verified accelerator stores exact (discount, shipdate, quantity) cells in sorted runs (Figure~\ref{fig:q6-aux-state}B). Since \texttt{l\_discount} has two decimal places, $d \pm 0.01$ covers exactly three values, so each read binary-searches each run three times. 

The two states are nearly the same size, but the tested accelerator is $126\times$ faster than multi-threaded DuckDB against $32\times$ for the single-threaded verified one. The dense matrix is cheap to read but hard to prove correct: one changed row updates every cumulative cell above it, and the proof must maintain a two-dimensional range-sum invariant. Exact-key runs need only a per-key sum invariant, at the cost of searches at read time. \textbf{Proof obligations push verified accelerators toward generic structures with simple invariants, which leave more work for reads; this pattern explains most of the verification gap in \Cref{sec:read-performance}.}

\section{Related Work}\label{sec:related-work}

\sparagraph{Synthesized accelerators.}
GenDB~\cite{gendb} and Bespoke OLAP~\cite{bespoke-olap} synthesize
workload-specific engines with LLMs; \Cref{sec:intro} and
\Cref{tbl:qual-comparison} contrast them with \thesystem{}.
GPT-DB~\cite{gptdb-trummer23} generates query code interactively, and
Castor~\cite{castor-feser20} specializes data layouts for static templated
workloads. Unlike all four, \thesystem{} keeps one copy of the data in the
host's native storage, runs ad-hoc queries unchanged, and maintains its
auxiliary state transactionally under writes.
%
% \thesystem{} also leverages auxiliary state to accelerate specific queries, but
% it generates incremental maintenance logic for this state, allowing it to
% preserve most of its accelerated performance while delivering fresh results in
% the presence of writes.

% GPT-DB~\cite{gptdb-trummer23} also generates code to execute a query, but does
% so interactively.
% %
% Castor~\cite{castor-feser20} is a classical approach to specializing the data
% layout to a static templated query workload; it does not support writes to the
% underlying data.

\sparagraph{Database extensibility.}
Extensions~\cite{dbextsurvey-kim25, pgextensions, duckdb-extensions},
UDFs~\cite{udfprism-arch25, udo-sichert22}, and declarative
sub-operators~\cite{dsos-jungmair23} let users add accelerators to a DBMS,
but leave them to design, implement, maintain, and port each one.
Tailwind~\cite{tailwind-yu26} integrates user-written accelerator families
non-invasively and selects instances for a workload. \thesystem{} instead
has agents identify and implement accelerators and their maintenance, and
can prove them correct.

\sparagraph{Learned specialization.}
There is a long line of prior work that uses machine learning techniques to
specialize a database system or data infrastructure to specific query
workloads~\cite{copyright-sudhir21, tsunami-ding20, flood-nathan20,
qdtree-yang20, sagedb-ding22, brad-yu24, neo-marcus19, bao-marcus21,
snarf-vaidya22, lsort-kristo20, plbf-vaidya21, treeline-yu22,
selfdrivingoperation-pavlo21}. At a high level, these techniques define a parameterized space of possible
system configurations and leverage learned models or analytical models to
predict performance across this space, allowing the system parameters to be
optimized for a given workload.
\thesystem{} is similar in spirit in that it specializes query execution to a
given workload, but it does not require a manually defined parameterized
configuration space; instead, LLM agents perform the workload analysis, code
generation, and optimization.

\sparagraph{Query compilation.}
Compiling engines such as Umbra~\cite{umbra-neumann20},
HyPer~\cite{hyper-kemper11}, and
Redshift~\cite{redshiftindustry-armenatzoglou22} generate code from a fixed
recipe per operator~\cite{querycompilation-neumann11}. \thesystem{}
generates code specialized to one query and its data, including auxiliary
state.

\sparagraph{Incremental view maintenance.}
\thesystem{} incrementally maintains any auxiliary state it chooses to generate,
which is similar to incremental view maintenance~\cite{ivm-zhou07, dbsp-budiu23,
dbtoaster-ahmad09, noria-gjengset18, naiad-murray13}. However, the key
difference is that incremental view maintenance comprises algorithms for
maintaining state computed from a SQL expression. In contrast, \thesystem{}
generates incremental maintenance code for arbitrary auxiliary state. In other
words, the auxiliary state \thesystem{} uses is not necessarily always a
database view.

\sparagraph{Verified systems and proof synthesis.}
Closest to \thesystem{}'s verified accelerators is Inductive Deductive Synthesis (IDS)~\cite{ids-agarwal26}, which uses agents to jointly synthesize implementations and proofs of distributed key-value stores. IDS starts from a hand-written specification and identifies writing it as the main bottleneck; \thesystem{} instead derives the specification mechanically from SQL, using our translator system \qsem{}. Other LLM-based work generates proofs for given code~\cite{autoverus-yang25, baldur-first23} or verified code for single functions~\cite{clover-sun24}. Verified data systems~\cite{verifiedrdbms-malecha10, veribetrkv-hance20, ironfleet-hawblitzel15} are proven once by hand, whereas \thesystem{} generates a verified accelerator per query. Work on SQL semantics and equivalence~\cite{spja, sqlcert-benzaken19, cosette-chu17} reasons about queries themselves rather than their implementations.
\vspace{-0.5em}

\iffalse
\begin{itemize}[leftmargin=*]
  \item \textbf{Synthesizing accelerators.} BespokeOLAP, GenDB, GPT-DB, Castor
  \item \textbf{Database extensibility.} UDFs, UDOs, declarative sub-operators (DSOs), Tailwind (user provided accelerators)
  \item \textbf{Learned system specialization.} SageDB, Bao/Neo, Tsunami/Flood/Qdtree, TreeLine, BRAD
  \item \textbf{Query compilation.} Neumann, HyPer/Umbra
  \item \textbf{Incremental view maintenance.} DBSP, materialize, noria, DBToaster, other classical work
  % \item \textbf{Coding agents?}
\end{itemize}
\fi

\section{Conclusion}

We presented \thesystem{}, a framework for generating query-specific database accelerators that extend an existing DBMS, rather than replace it. By building engine- and query-specific code with agentic pipelines around a fixed engine-neutral substrate, \thesystem{} allows accelerators to run over native storage and maintain their auxiliary state under
writes, yielding a $584\times$ geomean speedup over 40-threaded DuckDB. \thesystem{} can also generate accelerators with a machine-checked proof against a trusted specification derived from the SQL while retaining an $89\times$ read speedup over DuckDB. More broadly, our results suggest that LLM-based specialization need not replace the engine, or make assumptions about static data. A production DBMS can remain responsible for storage, transactions, recovery, and ad-hoc SQL, while generated components can specialize only the expensive parts of a workload.

\bibliographystyle{ACM-Reference-Format}
\bibliography{codegen}

\end{document}